\theoremstyle{definition}
\newtheorem{proposition}{Proposition}
\newtheorem{example}{Example}
\newtheorem{remark}{Remark}
\newtheorem{theorem}{Theorem}
\newtheorem{corollary}{Corollary}
\newcommand{\cal}{\mathcal}
\newcommand{\hM}{\mathcal{I}} 
\newcommand{\hT}{\mathcal{T}}
\newcommand{\N}{\mathbb N} 
\newcommand{\C}{\mathbb C} 
\newcommand{\T}{\mathbb T} 
\newcommand{\fii}{\varphi} 
\newcommand{\Om}{\Omega} 
\newcommand{\hi}{\mathcal{H}} 
\newcommand{\ki}{\mathcal{K}} 
\newcommand{\li}{\mathcal{L}} 
\newcommand{\id}{\mathbbm{1}} 
\newcommand{\lh}{\mathcal{L(H)}} 
\newcommand{\kh}{\mathcal{L(K)}} 
\renewcommand{\th}{\mathcal{T(H)}} 
\newcommand{\sh}{\mathcal{S(H)}} 
\newcommand{\sk}{\mathcal{S(K)}} 
\newcommand{\tr}[1]{\mathrm{tr}\left[#1\right]} 
\newcommand{\ptr}[1]{\mathrm{tr}_\mathcal{K}\left[#1\right]} 
\def\<{\langle} 
\def\>{\rangle} 
\newcommand{\kb}[2]{|#1 \rangle\langle #2|} 
\newcommand{\ip}[2]{\left\langle #1 | #2 \right\rangle} 
\newcommand{\rnk}[1]{\text{rank}(#1)} 
\newcommand{\Ao}{\mathsf{A}} 
\newcommand{\Po}{\mathsf{P}} 
\newcommand{\Mo}{\mathsf{M}} 
\newcommand{\Zo}{\mathsf{Z}} 
\def\d{{\mathrm d}} 
\newcommand{\hd}{\hi_\oplus}
\newcommand{\mc}[1]{\mathcal{#1}}
\begin{document}

\title[Minimal normal measurement models of quantum instruments]{Minimal normal measurement models of quantum instruments}

\author{Juha-Pekka Pellonp\"a\"a}
\email{juhpello@utu.fi}
\address{Turku Centre for Quantum Physics, Department of Physics and Astronomy, University of Turku, FI-20014 Turku, Finland}

\author{Mikko Tukiainen}
\email{mjatuk@utu.fi}
\address{Turku Centre for Quantum Physics, Department of Physics and Astronomy, University of Turku, FI-20014 Turku, Finland}

\begin{abstract}
In this work, we study the minimal normal measurement models of quantum instruments. 
We show that usually the apparatus' Hilbert space in such a model is unitarily isomorphic to the minimal Stinespring dilation space of the instrument.
However, if the Hilbert space of the system is infinite dimensional and the multiplicities of the outcomes of the associated observable (POVM) are all infinite then this may not be the case.
In these pathological cases, the minimal apparatus' Hilbert space is shown to be unitarily isomorphic to the instrument's minimal dilation space augmented by one extra dimension.
 We also point out errors in earlier papers of one of the authors (J-P.P.).
\newline

\noindent
PACS numbers: 03.65.Ta, 03.67.--a
\end{abstract}

\maketitle

\section{Introduction}
Quantum measurement theory provides an operational connection between mathematical Hilbert space operator theory and actual physically realizable measurements. The theory builds upon the natural fact that every measurement realizing a \textit{quantum device}, \textit{i.e.}, a quantum observable, channel or instrument, can be described as a protocol in which an observed system is brought in contact with a measurement apparatus and after interaction the value of the measured quantity is read from the apparatus' pointer scale \cite{QTM96}. Since Physics is an experimental science, knowing the limitations of the measurement theory is of paramount importance.

One of the fundamental results in quantum measurement theory is that every (completely positive) quantum instrument can be realized in a \textit{normal measurement} of the associate observable \cite{Ozawa84}. These measurements are important 
 since ideal experimental setups can be described by such models. In this work we solve the \textit{minimal} normal measurement models associated to a given quantum instrument or observable. Such models exclude all unnecessary degrees of freedom and may, for example, lead to optimal control over detrimental effects caused by inevitable noise. 

As a tool to approach the problem of finding out the minimal normal measurement models, we use \textit{Stinespring dilations}, that are isometric expansions of the given devices into their purifications. This approach allows us to re-phrase the above problem into a question `when does an operator $U: \hi_A \otimes \hi_B \rightarrow  \hi_A \otimes \hi_B$ defined via $U(\psi \otimes \xi) = Y \psi$, where $Y: \hi_A \rightarrow  \hi_A \otimes \hi_B$ is a given linear isometry and $\xi \in \hi_B$ is some unit vector, extend to a unitary operator on $\hi_A \otimes \hi_B$?'' Mathematically, this \textit{unitary extension problem} can be related to well known and established results on extendability of partial isometries \cite{AkhiGlaz93} and unitary dilations of contractive maps \cite{Halmos82}.

In addition, the work presented here has also applications in problems associated with optimizing the `Church of a Larger Hilbert Space'' in sense of minimality. For instance, in the field of open quantum systems our results may be used to solve a minimal environment in which a composite system evolution is unitary, hence having a minimal number of (potentially harmful) environmental degrees of freedom.

Our study is organized as follows. In the preliminary Sect.\,\ref{sec:prel} we introduce the mathematical concepts and notations needed throughout this paper. In Sect.\,\ref{sec:isometry} we study the problem of extending linear isometries to unitary mappings. Finally, in the last two Sects.\,\ref{sec:discrete} and \ref{sec:continuous} we present our main results by solving the minimal normal measurement models of quantum devices. 

We wish to emphasize that most of the work presented here has already been carried out in Refs.\,\cite{Pel2013II, Pel2014, PeENT}. Unfortunately, these papers contain some errors related to the aforementioned unitary extension problem -- in some very pathological cases the extension is not possible; see \hyperref[errata]{Errata} in the end of this paper. We 
show that 
this problem can be resolved by adding one extra dimension to the ancillary space $\hi_B$.

\section{Preliminaries}\label{sec:prel}

Throughout this article, we let $\hi$ be a complex Hilbert space and denote the set of \textit{bounded operators}\footnote{Similarly, $\mc L(E)$ denotes the set of bounded linear maps on a Banach space $E$ which is either $\th$ or $\lh$ in this article.} on $\hi$ by $\lh$, the set of \textit{projections} by $\mc P(\hi)$ and the set of \textit{trace class operators} by $\th$. Furthermore, we let $\sh = \{ \rho \in \th \, | \, \rho \geq 0, \, \tr{\rho}=1\}$ denote the convex set of \textit{quantum states}, and we call its extremal elements as \textit{pure states}. It is well known, that  pure states are of the form $\rho = | \xi \rangle \langle \xi |\in \mc P(\hi)$ for some unit vector $\xi \in \hi$.

\subsection{Observables} 
Let $\Omega$ be a set and $\Sigma\subseteq 2^\Omega$ a $\sigma$-algebra. \textit{Quantum observables} are identified with normalized positive operator valued measures (POVMs), that is, (weakly) $\sigma$-additive mappings $\Mo: \Sigma \rightarrow \lh$ such that $\Mo(X)\ge 0$ for all $X\in\Sigma$ and
$\Mo(\Omega) = \id_\hi$, where $\id_\hi$ stands for the identity operator on $\hi$. If $\Om_N=\{x_1,\,x_2,...\}$ with $N\leq\infty$ elements and $\Sigma_N = 2^{\Omega_N}$ is the corresponding $\sigma$-algebra of a POVM $\Mo$, then $\Mo$ 
can be viewed as a collection $(\Mo_1,\,\Mo_2,\ldots )$ of positive  operators $\Mo_i:=\Mo(\{x_i\})\in\lh$, such that $\sum_{i=1}^N\Mo_i=\id_\hi$ (weakly). In this case, we say that $\Mo$ is a \textit{discrete $N$-outcome observable}. By reducing the outcome space, we may (and will) assume that $\Mo_i\ne 0$ for all $i<N+1$.

A special class of observables consists of POVMs $\Po:\Sigma \rightarrow \lh$ whose range contain only projections, \textit{i.e.}, $\Po(X) \in \mc P (\hi)$ for every $X \in \Sigma$. We call them normalized projection valued measures (PVMs) or \textit{sharp observables}. 

Any observable $\Mo:\Sigma \rightarrow \lh$ has a {\it Naimark dilation} $(\ki, \Po, J)$ into a sharp one, that is, there exist an auxiliary Hilbert space $\ki$, a PVM $\Po:\Sigma \rightarrow \mc L (\ki)$ and a linear isometry $J: \hi \rightarrow \ki$ such that $\Mo(X) = J^* \Po(X) J$ for all $X\in\Sigma$. A dilation $(\ki, \Po, J)$ is called \textit{minimal} if the linear span of vectors $\Po(X) J\psi$, $X\in\Sigma$, $\psi\in\hi$, is dense in $\ki$. It is well known that in this case $\Mo$ is sharp if and only if $J$ is unitary. 

\subsection{Measurement models} 
A quantum measurement is mathematically modelled by a 4-tuple $(\ki, \Zo, \mc V, \eta)$ where $\ki$ is the \textit{apparatus' Hilbert space}, $\Zo: \Sigma \rightarrow \kh$ is the \textit{pointer observable} (POVM), $\mc V: \mc T( \hi \otimes \ki) \rightarrow \mc T ( \hi \otimes \ki)$ is a completely positive trace-preserving (CPTP) linear map describing the {\it interaction} between the system and the apparatus, and $\eta \in \sk$ is the {\it initial state of the probe}. Measurement model $(\ki, \Zo, \mc V, \eta)$ actualizes the measured observable $\Mo$ via the {\it probability reproducibility condition}
\begin{eqnarray} \label{eq:repro}
\tr{\Mo(X) \rho} &=& \tr{ \id_\hi \otimes \Zo(X) \, \mc V(\rho \otimes \eta) }, \quad X \in \Sigma,\; \rho \in \sh.
\end{eqnarray} 
For this reason we call $(\ki, \Zo, \mc V, \eta)$ a measurement model for $\Mo$ or shortly an $\Mo$-\textit{measurement}.
\cite{QTM96}

Every observable has an infinite number of different measurements. Among these there are \textit{normal measurements} in which the pointer $\Po:\Sigma \rightarrow \kh$ is a sharp observable, the interaction is of the form $\rho\otimes\xi \mapsto U \rho \otimes \xi U^*$ for some unitary operator $U$ on $\hi\otimes \ki$ and $\eta = |\xi\rangle\langle\xi|$ for some unit vector $\xi\in \ki$ \cite{Ozawa84} -- in such a case we write shortly $(\ki, \Po, U, \xi)$.
It can be easily derived from Eq.\,(\ref{eq:repro}), that the measured observable induced by a normal model $(\ki, \Po, U, \xi )$ is given by 
\begin{eqnarray}\label{obs:dilation}
\Mo(X) = Y_\xi^*\,  U^*(\id_\hi \otimes \Po(X))U \, Y_\xi \, , \qquad X\in \Sigma, 
\end{eqnarray} 
where $Y_\xi: \hi \rightarrow \hi \otimes \ki$ is a linear operator defined via $Y_\xi(\varphi)=\varphi\otimes \xi$ for all $\varphi\in\hi$. Since $Y_\xi^* Y_\xi = \id_\hi$, $Y_\xi$ is an isometry and Eq.\,(\ref{obs:dilation}) constitutes a Naimark dilation of $\Mo$. Note that this dilation need not to be minimal. For instance, if $\Mo$ is a sharp observable, then Eq.\,\eqref{obs:dilation} is never minimal Naimark dilation \cite{LaYl04}.

\subsection{Instruments and channels} 
The features of a measurement model $(\ki, \Zo, \mc V, \eta )$, that refer to the measured system, are neatly captured by the associated (Schr\"{o}dinger) \textit{instrument}: a completely positive trace non-increasing mapping $\hM : \Sigma\to\li\big(\th\big)$ defined via
\begin{eqnarray}
\hM(X)(T) = \ptr{ \id_\hi \otimes \Zo(X) \,  \mc V( T \otimes \eta)}, \qquad X \in \Sigma, \; T \in \th,
\end{eqnarray} 
where $\text{tr}_\ki$ stands for the partial trace over $\ki$.
For example, the measurement statistics of the measured observable $\Mo$ are given by 
\begin{eqnarray}\label{prel:repro}
\tr{\Mo(X) \,  \rho} = \tr{\hM(X)(\rho)}, \qquad \rho \in \sh.
\end{eqnarray} 
We say that the observable $\Mo$ in Eq.\,(\ref{prel:repro}) is \textit{associated} to the instrument $\hM$. The mapping $\hM$ can be equivalently viewed in its dual form $\hM^* : \Sigma \to\li\big(\lh\big)$ defined via 
\begin{eqnarray}
\tr{\hM^*(X)(B) \, T} = \tr{B \, \hM(X)(T)},  \qquad B \in \lh, \; T\in\th.
\end{eqnarray} 
Throughout this article, we will be mainly working in this \textit{Heisenberg picture}. Note that $\Mo(X)=\hM^*(X)(\id_\hi)$ for all $X\in\Sigma$.

As a generalization of the previously mentioned Naimark dilation theorem, any instrument $\hM : \Sigma\to\li\big(\th\big)$ has a \textit{Stinespring dilation} $(\ki, \Po, Y)$, \textit{i.e.}, there exists a Hilbert space $\ki$, a PVM $\Po:\Sigma \rightarrow \mc L (\ki)$ and a linear isometry $Y: \hi \rightarrow \hi\otimes \ki$ such that
\begin{eqnarray}
\hM^*(X)(B) &=& Y^* \left(  B \otimes \Po(X)\right) Y, \qquad B\in\lh.
\end{eqnarray} A Stinespring dilation $(\ki, \Po, Y)$ for $\hM$ is called \textit{minimal} if the vectors $\left( B \otimes \Po(X)\right) Y  \psi$, $B \in \lh$, $X \in \Sigma$, $\psi \in \hi$, span a dense subspace of $\hi \otimes \ki$. 
The instrument $\hM$ corresponding to a normal measurement model $(\ki, \Po, U, \xi)$ has a particularly simple form, namely,
\begin{eqnarray}\label{prel:instrudilation}
\hM^*(X)(B) = Y_\xi^* U^* ( B \otimes \Po(X) ) U Y_\xi, \qquad  X \in \Sigma,  \; B\in \lh.
\end{eqnarray}
Obviously Eq.\,(\ref{prel:instrudilation}) constitutes a Stinespring dilation for the instrument $\hM$ and it has been proven in Ref.\ \cite{Ozawa84} that every instrument has a measurement dilation of such form. Determining when Eq.\,(\ref{prel:instrudilation}) defines a minimal Stinespring dilation of $\hM$ is one of the main purposes of this article.


Any instrument $\hM: \Sigma  \rightarrow \li\big(\th\big)$ {\it induces} a (Schr\"odinger) {\it channel} (a CPTP linear map) $\mc E:\mc T (\hi) \rightarrow \mc T (\hi)$ describing the total state transformation induced by the corresponding measurement $( \ki, \Zo, \mc V, \eta )$ via 
\begin{eqnarray}\label{prel:inducing}
\mc E(T) = \hM(\Omega)(T) = \ptr{\mc V(T \otimes \eta)}, \qquad T \in \th.
\end{eqnarray}
Conversely, any channel can be induced in this way from some measurement $(\ki, \Zo, \mc V, \eta)$.
 Eq.\,(\ref{prel:inducing}) can be equivalently expressed in the Heisenberg picture as
\begin{eqnarray}
\mc E^*(B) = \hM^*(\Omega)(B),
\qquad B \in \lh.
\end{eqnarray} Note that the ``trace-preserving''-property in the Schr\"odinger picture corresponds to \textit{unitality}, $\mc E^*(\id_\hi) = \id_\hi$, in the Heisenberg picture. 

A Stinespring dilation of a channel is naturally inherited from its inducing instrument which, however, is not necessarily minimal. Furthermore, any completely positive trace non-increasing linear map $\mc C: \mc T (\hi) \rightarrow \mc T (\hi)$ has a \textit{Kraus representation}, that is there exists a sequence of \textit{Kraus operators} $A_1, \, A_2,...  \in \lh$, with $r \leq \infty$ elements, satisfying $\sum_{s=1}^r A_s^* A_s \leq \id_\hi$ such that
\begin{eqnarray}
\mc C(T) &=& \sum_{s=1}^r A_s T A_s^*, \qquad T\in \th,
\end{eqnarray}
which equivalently in the Heisenberg picture reads
\begin{eqnarray}
\mc C^*(B) &=& \sum_{s=1}^r A_s^* B A_s, \qquad B\in \lh.
\end{eqnarray}

Obviously, the Kraus representation is not unique. We say that the representation of $\mc C$ having the smallest number of non-zero Kraus operators is \textit{minimal}, in which case the corresponding minimal number $r$ is called as the \textit{(Kraus) rank} of $\mc C$.
Note that $\mc C$ is a quantum channel if and only if $\sum_{s=1}^r  A_s^* A_s = \id_\hi$.

A channel $\mc E$ and an observable $\Mo$ are said to be \textit{compatible}, if there exists an instrument having $\mc E$ and $\Mo$ as its induced channel and associated observable, respectively. It is well known that not every channel and observable are compatible. We end this section with an example that illustrates this phenomenon.

\begin{example} 
Let $\hM: \Sigma \rightarrow \li\big(\th\big)$ be an instrument. If the induced channel $\mc E$ has Kraus rank 1, that is, $\mc E(T) = \mc I(\Omega)(T) = A T A^*$, for some linear isometry $A:\hi \rightarrow \hi$ and for all $T\in \th$, then the associated observable $\Mo:\Sigma \rightarrow \lh$ is necessarily \textit{trivial}, \textit{i.e.}, $\Mo(X) = p(X) \, \id_\hi$, $X\in\Sigma$, for some probability measure $p:\Sigma \rightarrow [0,1]$. In other words, isometric channels are only compatible with trivial observables. This is a straightforward generalization of the well known ``no information without disturbance''-result; see the proof given in \cite[p.\ 32]{QTM96} with a substitution $\mc I(X)(T) \leftrightarrow A^* \mc I(X)(T) A$. On the contrary, it is easy to verify that
each trivial observable is compatible with every quantum channel.
\end{example}

\section{Unitary extension problem}\label{sec:isometry}

Let $\hi_A$ and $\hi_B$ be two (possibly infinite dimensional non-trivial) Hilbert spaces and let $\id_A$ and $\id_B$ denote their identity operators, respectively.
Denote their \textit{tensor product} Hilbert space by $\hi_{AB}:=\hi_A\otimes\hi_B$ and let $\id_{AB}:=\id_A\otimes\id_B$. Clearly $\dim(\hi_{AB})= \dim(\hi_A) \dim(\hi_B)$. We assume that $\hi_A$ is separable and thus has a countable orthonormal basis $\{h_n\}_{n=1}^{\dim(\hi_A)}$. We do not, however, assume the separability of $\hi_B$ in this section unless otherwise stated.

Let $Y:\,\hi_A\to\hi_{AB}$ be any linear isometry, that is, it is of the form
\begin{eqnarray}
Y = Y \sum_{n=1}^ {\dim(\hi_A)} \kb{h_n}{h_n} = \sum_{n=1}^ {\dim(\hi_A)} \kb{y_n}{h_n},
\end{eqnarray}
where the vectors $y_n:=Yh_n\in\hi_{AB}$ form an orthonormal set. Indeed, from $Y^* Y=\id_A$ one gets $\ip{y_n}{y_m} = \ip{h_n}{Y^* Y h_m} = \delta_{nm}$ (the Kronecker delta).
Since 
\begin{eqnarray}
P := Y Y^*= \sum_{n=1}^ {\dim(\hi_A)} \kb{y_n}{y_n}
\end{eqnarray} 
is a projection, we may use the Hilbert projection theorem to write $\hi_{AB}=P(\hi_{AB}) \oplus P^\perp (\hi_{AB})$ where $P^\perp := \id_{AB} -P$. Here the subspace $P(\hi_{AB})$ corresponds to the image of the isometry $Y$, \textit{i.e.}, 
$$ 
P(\hi_{AB})=Y(\hi_A)=\overline{{\rm span}\{y_n\,|\,1\le n<\dim(\hi_A)+1\} }.
$$ 
Since $\dim(P(\hi_A)) =\dim(Y(\hi_A)) = \dim (\hi_A)$ we get
\begin{eqnarray}\label{dimensions}
\dim(\hi_A) \dim(\hi_B) &=&  \dim (P(\hi_{AB})) +  \dim (P^\perp(\hi_{AB})) \nonumber \\
&=& \dim(\hi_A)+\dim (P^\perp(\hi_{AB})).
\end{eqnarray}

The main question of this section is: When does there exist a {\it unitary operator}\footnote{Note that $U:\hi_{AB}\rightarrow\hi_{AB}$ can always be extended to a partial isometry, but there may exist non-zero vectors $\varphi\in\hi_{AB}$ such that $U\varphi=0$.
} $U:\,\hi_{AB}\to\hi_{AB}$ and a unit vector $\xi\in\hi_B$, such that 
\begin{eqnarray}\label{isom:extension}
U(\psi\otimes\xi)=Y\psi,
\end{eqnarray} for all $\psi\in\hi_A\,$? We will call this the \textit{unitary extension problem.} We immediately notice that this problem reduces to finding only the unitary operator, since the unit vector $\xi$ may be fixed arbitrarily. Indeed, for any unit vector $\xi' \in \hi_B$, we have $\xi' = U_B\, \xi$ for some unitary operator $U_B$ on $\hi_B$ and thus $$U(\psi \otimes \xi) = U (\id_{A} \otimes U_B)^* (\psi \otimes U_B\xi) = U' (\psi \otimes \xi')$$ where $U' := U (\id_{A} \otimes U_B)^*$ is a unitary operator on $\hi_{AB}$. 

By writing
\begin{eqnarray}\label{jdfjhfdhj}
Y(\hi_A) \oplus P^\perp(\hi_{AB})=\hi_{AB}=[\hi_A\otimes(\C\xi)]\oplus[\hi_A\otimes(\C\xi)]^\perp
\end{eqnarray}
and noting that the (separable) subspaces $Y(\hi_A)$ and $\hi_A\otimes(\C\xi)$ are unitarily isomorphic\footnote{Since their orthonormal bases 
$\{y_n\}_{n=1}^{\dim(\hi_A)}$ and $\{h_n\otimes\xi\}_{n=1}^{\dim(\hi_A)}$ are of the same cardinality.}
one sees that such a $U$ exists (for a given $\xi$) if and only if the subspaces $P^\perp(\hi_{AB})$ and $[\hi_A\otimes(\C\xi)]^\perp$ are unitarily isomorphic. This happens at least when
\begin{itemize}

\item[(i)] $\hi_B$ is not separable.
Indeed, then $\hi_{AB}$ is not separable and, for any unit vector $\xi\in \hi_B$, $P^\perp(\hi_{AB})$ and $[\hi_A\otimes(\C\xi)]^\perp$ are non-separable (infinite dimensional) subspaces having orthonormal bases of the same cardinality, \textit{i.e.}, these subspaces are unitarily isomorphic.
\item[(ii)] $\hi_A$ is finite dimensional. Since $\dim (\hi_A) < \infty$, we can solve $\dim (P^\perp(\hi_{AB}))$ from Eq.\,(\ref{dimensions}):\footnote{If $\dim(\hi_B)=\infty$ then $\dim (P^\perp(\hi_{AB}))=\infty-\dim(\hi_A)=\infty$.} 
$$
\dim (P^\perp(\hi_{AB})) =\dim(\hi_A) \dim(\hi_B)-\dim(\hi_A).
$$
By fixing any unit vector $\xi\in\hi_B$ and an orthonormal basis $\mc L$ of $\hi_{AB}$ such that $h_n\otimes\xi\in \mc L$, for all $n\le \dim(\hi_A)$, one sees that the dimension of $[\hi_A\otimes(\C\xi)]^\perp$ is $\dim([\hi_A\otimes(\C\xi)]^\perp) =\dim(\hi_A) \dim(\hi_B)-\dim(\hi_A) = \dim(P^\perp(\hi_{AB}))$.
\end{itemize}

The above shows that problems in extending an operator $U$ defined in Eq.\,(\ref{isom:extension}) into a unitary one may only occur when 
\begin{itemize}
\item
$\hi_A$ is infinite dimensional and 
\item
$\hi_B$ is separable (with either $\dim(\hi_B)<\infty$ or $\dim(\hi_B)=\infty$); 
\end{itemize}
{\it for the rest of this section we assume that these conditions hold.} Now $\dim(\hi_A) \dim(\hi_B)-\dim(\hi_A)=\infty-\infty$ and $\dim(P^\perp(\hi_{AB}))$ may even vanish\footnote{Recall that $\N$ and $\N\times S$, $S\subseteq\N$, $S\ne\emptyset$, have the same cardinality, so that the basis $\{y_n\}_{n=1}^\infty$ of $Y(\hi_A)$ may also be the basis of $\hi_{AB}=\hi_A\otimes\hi_B$; see Eq.\,\eqref{jdfjhfdhj}.} if $Y(\hi_A)=\hi_{AB}$. 

\begin{example}\label{ex:rankinfinity}
If $\dim(\hi_B)>1$ then $\dim[\hi_A\otimes(\C\xi)]^\perp=\infty$. Thus, in this case $U$ and $\xi$ exist if and only if $\dim (P^\perp(\hi_{AB}))=\infty$, or equivalently $\text{rank} \ P^\perp = \infty$.
\end{example}

\begin{example}\label{ex:lifting}
Let $\hi_B:=\C$ so that $\dim(\hi_B)=1$ and $\hi_{AB} \cong \hi_A$. Define a linear isometry $Y: \hi_A \rightarrow \hi_{AB}$ via $Y=\sum_{n=1}^\infty\kb{h_{n+1}}{h_n}$. This prototype of an isometric operator, which is clearly not unitary, corresponds physically to a systematically leaking quantum channel $B \mapsto Y B Y^*$. Now $P^\perp = \id_{AB} - YY^* = |h_1\rangle \langle h_1 |$ so that $P^\perp( \hi_{AB}) = \C h_1$ and $\dim[\hi_A\otimes(\C\xi)]^\perp=0\ne 1=\dim(P^\perp( \hi_{AB}))$, where
$\xi\in\T:=\{\xi\in\C\,|\,|\xi|=1\}$. Thus, $U$ defined via Eq.\,(\ref{isom:extension}) does not extend to a unitary operator for any $\xi\in\mathbb T$. However, a classical result by Halmos \cite[Problem 222]{Halmos82} shows that any linear isometry $Y:\hi_A \rightarrow \hi_A$ can be dilated to a unitary mapping $U_+: \hi_A\oplus \hi_A \rightarrow \hi_A \oplus \hi_A$ by defining
\begin{eqnarray}
U_+:= \left( \begin{array}{lc}
Y & \id_A - Y Y^* \\
0 & - Y^*
\end{array} \right).
\end{eqnarray}
Now $U_+{\psi\choose0}= {Y \psi\choose0}$. Since $\hi_A \oplus \hi_A \cong \hi_A \otimes (\C \oplus \C)$, we note that growing the dimension of the auxiliary space $\hi_B$ just by one is enough to solve the unitary extension problem.
\end{example}

Motivated by the previous example, we end this section by showing that, if $\dim[\hi_A\otimes(\C\xi)]^\perp\ne \dim (P^\perp(\hi_{AB}))$, the unitary extension problem can be solved by adding one extra dimension of the auxiliary Hilbert space $\hi_B$. By doing this, we have
\begin{eqnarray}
\hi_A\otimes(\hi_B\oplus\C)\cong\hi_{AB}\oplus\hi_A\cong\hi_{AB}\times\hi_A
\end{eqnarray}
We then interpret ${Y:\,\hi_A\to\hi_{AB}}$ as an isometry ${Y_+:\hi_A \rightarrow \hi_{AB}\times\hi_A}$ via $Y_+\psi:=(Y\psi,0)$ and define $P_+ := Y_+ Y_+^* \in \mc P (\hi_{AB}\times \hi_A)$.
Now $P_+^\perp(\hi_{AB}\times \hi_A)$ and
$[\hi_A\otimes(\C\xi_+)]^\perp$, where $\xi_+\in\hi_B\oplus\C$ is a unit vector, are both separable infinite dimensional Hilbert spaces and are thus unitarily isomorphic. 


\section{Minimal normal measurement models: the discrete case}\label{sec:discrete}
In this section we study the implications of the unitary extension problem to quantum measurement theory. In particular, we prove that the size of minimal normal measurement realization of practically any quantum instrument equals to the sum of its Kraus ranks. However, there are some pathological cases in which this dimension has to be grown by one -- we completely isolate these cases to instruments having so-called rank-$\infty$ POVMs as their associated observables. Furthermore, we show that any $N$-outcome observable can be measured with an $N$-dimensional normal measurement.

\subsection{Instruments}
Let  $d$ and $N$ belong to $\{1,2,\ldots\}\cup\{\infty\}$. 
Suppose $\Mo=(\Mo_1,\,\Mo_2,\ldots)$ is a discrete $N$-outcome observable in a $d$-dimensional Hilbert space $\hi_d$ with an orthonormal basis $\{ h_n\}_{n=1}^d$. Let $\id_d$ be the identity operator of $\hi_d$.
Since each of the operators $\Mo_i$, $i<N+1$, is a positive non-zero bounded operator on $\hi_d$, we may write
\begin{eqnarray}\label{memo:decomposition}
\Mo_i=\sum_{k=1}^{m_i}\kb{d_{ik}}{d_{ik}},
\end{eqnarray}
where the vectors $d_{ik}\in\hi_d$, $k<m_i+1$, form a linearly independent set.
Also, there exist (linearly independent) vectors $g_{ik}\in\hi_d$, $k<m_i+1$, such that the following bi-orthogonality relation holds:
\begin{eqnarray}
\<d_{ik}|g_{i\ell}\>=\delta_{k\ell}.
\end{eqnarray}
For example, if $\Mo_i$ has a discrete spectrum (this always holds when $d<\infty$), then one can write
\begin{eqnarray}
\Mo_i=\sum_{k=1}^{m_i}\kb{d_{ik}}{d_{ik}}=\sum_{k=1}^{m_i}\lambda_{ik}\kb{\psi_{ik}}{\psi_{ik}},
\end{eqnarray}
where the eigenvectors $\psi_{ik}$ of $\Mo_i$ form an orthonormal set, the eigenvalues $\lambda_{ik}=\|d_{ik}\|^2$ are non-zero (and bounded by 1) and $d_{ik}:=\sqrt{\lambda_{ik}}\psi_{ik}$. Then we may define the vectors $g_{ik}$ via $g_{ik}:=\psi_{ik}/\sqrt{\lambda_{ik}}$ that satisfy the aforementioned properties.\footnote{
If $d=\infty$ and $\Mo_i$ has no discrete spectrum, we may use a trace-class operator
$T=\sum_{n=1}^\infty p_n\kb n n$, $p_n>0$, $\sum_{n=1}^\infty p_n<\infty$, to get a trace-class operator 
$T\Mo_iT$ with an eigendecomposition $T\Mo_i T=\sum_{k=1}^{m_i}\alpha_k\kb{\psi_k}{\psi_k}$.
Then, by defining $d_{ik}:=\sqrt{\alpha_k}T^{-1}\psi_k$ (and $g_{ik}:=T\psi_k/\sqrt{\alpha_k}$) we get Eq.\,(\ref{memo:decomposition}). } 

We say that $m_i$ is the multiplicity of the outcome $x_i\in\Omega_N$, or equivalently the {\it rank} of $\Mo_i=\Mo(\{x_i\})$. Note that $m_i\le d$ for all $i<N+1$. In particular, we call $\Mo$ \textit{rank-1}, if $m_i=1$ for all $i<N+1$. 
One particularly nice feature of rank-1 POVMs is that they are \textit{post-processing maximal} \cite{MaMu90, Buschemi2005, Pel2011} which physically means that they can be seen as ultimate fine-grainings of measurable quantities, after which no additional information can be extracted by means of performing additional measurements. 
As an opposite we define an observable $\Mo:\Sigma_N \rightarrow \mc L(\hi_\infty)$ to be rank-$\infty$ if $m_i = \infty$ for all $i<N+1$. A subclass of rank-$\infty$ observables is formed by trivial observables, $\Mo_i=p_i\id_\infty$, $p_i>0$, $\sum_{i=1}^N p_i=1$.

Let $\hM:\Sigma_N\to\li\big(\cal T(\hi_d)\big)$ be an instrument, such that $\Mo$ is its associated observable. Now
$\hM$ can be viewed as a collection $(\hM_1,\hM_2,\ldots)$ of CP trace non-increasing maps
$\hM_i:\,\hT(\hi_d)\to\hT(\hi_d)$, such that $\hM_i^*(\id_d)=\Mo_i$ for all $i<N+1$.
Each of $\hM_i^*$ has a minimal Kraus decomposition
\begin{eqnarray}\label{asdfasdf}
\hM_i^*(B)=\sum_{s=1}^{r_i}\Ao_{is}^*B\Ao_{is},\qquad B\in\li(\hi_d), 
\end{eqnarray}
where the Kraus operators $\Ao_{is}\in\li(\hi_d)$, $s<r_i+1$, are linearly independent and satisfy $\sum_s \Ao_{is}^* \Ao_{is} = \Mo_i$. Recalling Section \ref{sec:prel}, number $r_i$ is the Kraus rank of $\hM_i$ and we say that the instrument $\hM$ is \textit{rank-1} if $r_i=1$, for all $i<N+1$. Note that $r_i\le m_i d\le d^2$ \cite[Theorem 1]{Pel2013II}. 

Define the \textit{instrument's structure vectors} $\fii_{iks}:=\Ao_{is}\,g_{ik}\in\hi_d$ for which
\begin{eqnarray}
\sum_{s=1}^{r_i}\<\fii_{iks}|\fii_{i\ell s}\>=\<g_{ik}|\Mo_ig_{i\ell}\>=\delta_{k\ell}.
\end{eqnarray} 
Since $\Ao_{is}^* \Ao_{is} \le \Mo_i$
and
$\sum_{k=1}^{m_i} |g_{ik}\>\< d_{ik}| g_{i\ell} \> = g_{i\ell}$ 
we get
\begin{eqnarray}
\Ao_{is}=\sum_{k=1}^{m_i}\kb{\fii_{iks}}{d_{ik}}.
\end{eqnarray}
Especially, if $\hM$ is rank-1, \textit{i.e.}, $s=r_i=1$, we may write $\Ao_{i1}=\Ao_i$ and $\fii_{ik1}=\fii_{ik}$ to get
$\hM_i^*(B)=\Ao_{i}^*B\Ao_{i}$ and $\Ao_{i}=\sum_{k=1}^{m_i}\kb{\fii_{ik}}{d_{ik}}$ where $\<\fii_{ik}|\fii_{i\ell}\>=\delta_{k\ell}$.

Let then $\vec r:=(r_1,r_2,\ldots)$ be the \textit{Kraus rank vector} of $\hM$ and let $\hi_{\vec r}$ be a (separable) Hilbert space spanned by an orthonormal set 
$$
\{ e_{is}
 \ | \ i < N+1, \; s < r_i +1\}. 
$$ 
Obviously
$\dim\hi_{\vec r}=\sum_{i=1}^N r_i \leq N d^2$.
Let $\Po$ be a discrete sharp observable defined by 
\begin{eqnarray}
\Po_i=\sum_{s=1}^{r_i}\kb{e_{is}}{e_{is}}
, \qquad i<N+1.
\end{eqnarray}
Define a (bounded) linear map $Y:\,\hi_d\to\hi_d\otimes\hi_{\vec r}$ via
\begin{eqnarray}
Y\psi=\sum_{i=1}^N\sum_{s=1}^{r_i}\Ao_{is}\psi\otimes e_{is}, \qquad\psi\in\hi_d,
\end{eqnarray}
for which
\begin{eqnarray}
\<\psi|Y^*(B\otimes\Po_i)Y\fii \> &=&
\sum_{j,k=1}^N\sum_{s,t=1}^{r_i}\<\Ao_{js}\psi| B\Ao_{kt}\fii\>\stackrel{=\,\delta_{ji}\delta_{ki}\delta_{st}}{\<e_{js}| \Po_ie_{kt}\>} \nonumber \\
&=& \<\psi|\hM_i^*(B)\fii\>, \qquad \psi,\,\varphi \in \hi_d,
\end{eqnarray} implying that $\hM_i^*(B)=Y^*(B\otimes\Po_i)Y$. Now $Y^*Y=\sum_i\mc I_i^*(\id_d)=\id_d$, that is $Y$ is an isometry, and thus $\big(\hi_{\vec r},\Po,Y\big)$ constitutes a Stinespring dilation of $\hM$. 
It can be shown that the dilation is minimal, that is,
the closure of the span of vectors $(B\otimes\Po_i)Y\psi$, $B\in\li(\hi_d)$, $i<N+1$, $\psi\in\hi_d$, is the whole $\hi_d\otimes\hi_{\vec r}$ \cite{Pel2013I}. For instance, if $\mc I$ is rank-1, 
this follows immediately from 
$
(B\otimes\Po_i)Y g_{i1}=(B\Ao_{i} g_{i1})\otimes e_{i1}=h_n\otimes e_{i1}$,
when choosing $B=|h_n\rangle \langle \fii_{i1} | \in \mc L(\hi_d)$.

Fix then an arbitrary unit vector $\xi\in\hi_{\vec{r}}$ and define an operator
$U:\,\hi_d\otimes\C\xi\to\hi_d\otimes\hi_{\vec r}$ via
\begin{eqnarray}\label{memo:unitary}
U(\psi\otimes\xi)&:=&Y\psi=\sum_{i=1}^N\sum_{s=1}^{r_i}\Ao_{is}\psi\otimes e_{is} \\ \nonumber
&=&\sum_{i=1}^N\sum_{s=1}^{r_i}{\left(\sum_{k=1}^{m_i}\<d_{ik}|\psi\>\fii_{iks}\right)}\otimes e_{is}, \qquad\psi\in\hi_d.
\end{eqnarray}
The crucial question now is {\it ``when can $U$ be extended to a unitary operator of $\hi_d\otimes\hi_{\vec r}$?''}
If this unitary extension problem can be solved, one gets 
\begin{eqnarray}
\hM_i^*(B)=Y^*(B\otimes\Po_i)Y=Y_\xi^*U^*(B\otimes\Po_i)U Y_\xi,
\end{eqnarray}
where $Y_\xi:\,\hi_d\to\hi_d\otimes\hi_{\vec r}$ is an isometry, and thus one gets a  minimal\footnote{Since any normal measurement $(\ki, \Po', U', \xi')$ realizing $\mc I$ satisfies $\dim \ki \geq \dim \hi_{\vec{r}} = \sum_{i=1}^N r_i$ where $r_i$ is the Kraus rank of $\hM_i$, $i<N+1$ \cite[Theorem 3]{Pel2013II}.
} 
 normal measurement model $(\hi_{\vec r},\Po,U, \xi)$ realizing $\hM$; see Eq.\,(\ref{obs:dilation}).
 The analysis done in the previous section shows that this will always happen, if $d<\infty$. 
We also know that, if $d=\infty$, the extension succeeds, when the dimension of the auxiliary space $\hi_{\vec{r}}$ is increased by one. 
The following theorem shows that usually this extension is not necessary.

\begin{theorem}
Let $\hi$ be a separable Hilbert space and $\hM:\,\Sigma_N\to\li\big(\th\big)$ an instrument having a discrete $N$-outcome observable $\Mo=(\Mo_1,\Mo_2,\ldots)$ associated to it. Let $\big(\ki,\Po,Y\big)$ be a minimal Stinespring dilation of $\hM$. 
If for some $i<N+1$ the effect $\Mo_i$ has a finite rank, \textit{i.e.}, $0<m_i<\infty$, then for any unit vector $\xi\in\ki$ there exists a unitary operator 
$U$ on $\hi\otimes\ki$, such that $U(\psi\otimes\xi)=Y\psi$ for all $\psi\in\hi$.
\end{theorem}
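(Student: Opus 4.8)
The plan is to reduce, via the analysis of Section~\ref{sec:isometry}, to a single rank computation and then exploit the finite-rank hypothesis. First note that if $\dim\hi<\infty$ the claim is immediate from case (ii) of Section~\ref{sec:isometry}, so I may assume $\dim\hi=\infty$. Since minimal Stinespring dilations are unique up to unitary equivalence, and since the rank of $P^\perp:=\id_{\hi\otimes\ki}-YY^*$ is an invariant of the equivalence class (conjugating $Y$ by $\id_\hi\otimes V$ conjugates $P:=YY^*$ accordingly), I may replace the given dilation $(\ki,\Po,Y)$ by the explicit minimal dilation constructed above, with $\ki=\hi_{\vec r}$, orthonormal basis $\{e_{is}\}$, and $Y\psi=\sum_{i,s}\Ao_{is}\psi\otimes e_{is}$. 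I first claim $\dim\ki\ge 2$: otherwise $\dim\ki=\sum_i r_i=1$ forces $N=1$ and $r_1=1$, whence $\Mo_1=\Mo(\Omega)=\id_\hi$ has rank $\dim\hi=\infty$, contradicting the hypothesis that some $\Mo_i$ has finite rank. Hence $\dim\ki\ge 2$, so by Example~\ref{ex:rankinfinity} (applied with $\hi_A=\hi$, $\hi_B=\ki$) a unitary extension exists for a given unit vector $\xi$ if and only if $\text{rank}\,P^\perp=\dim P^\perp(\hi\otimes\ki)=\infty$. As observed at the beginning of Section~\ref{sec:isometry}, existence for one $\xi$ yields existence for every unit vector $\xi$, so it suffices to prove $\text{rank}\,P^\perp=\infty$.

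The main step is this rank computation. From the explicit form of $Y$ one reads off $Y^*(\eta\otimes e_{is})=\Ao_{is}^*\eta$ for all $\eta\in\hi$, so that $P^\perp(\hi\otimes\ki)=\ker Y^*$ contains $\eta\otimes e_{is}$ whenever $\eta\in\ker\Ao_{is}^*$. Now fix an index $i_0$ with $0<m_{i_0}<\infty$ (which exists by hypothesis) and any $s_0\le r_{i_0}$ (such $s_0$ exists since $\Mo_{i_0}\ne 0$ forces $r_{i_0}\ge 1$). Because $\Ao_{i_0 s_0}=\sum_{k=1}^{m_{i_0}}\kb{\fii_{i_0 k s_0}}{d_{i_0 k}}$ has rank at most $m_{i_0}<\infty$, the kernel $\ker\Ao_{i_0 s_0}^*$ has codimension $\mathrm{rank}\,\Ao_{i_0 s_0}\le m_{i_0}$ in the infinite-dimensional space $\hi$, hence is itself infinite dimensional. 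Consequently the vectors $\{\eta\otimes e_{i_0 s_0}\mid \eta\in\ker\Ao_{i_0 s_0}^*\}$ form an infinite-dimensional subspace of $\ker Y^*=P^\perp(\hi\otimes\ki)$, giving $\text{rank}\,P^\perp=\infty$ as required.

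Combining the two steps, $\text{rank}\,P^\perp=\infty$ together with $\dim\ki\ge 2$ yields, via Example~\ref{ex:rankinfinity}, the desired unitary $U$ for every unit vector $\xi\in\ki$. The one genuinely load-bearing step is the rank computation: the key mechanism is that a finite-rank effect $\Mo_{i_0}$ confines each of its Kraus operators $\Ao_{i_0 s}$ to finite rank, so that the adjoint $\Ao_{i_0 s_0}^*$ annihilates an infinite-dimensional subspace of $\hi$, and tensoring this subspace with the single basis vector $e_{i_0 s_0}$ embeds it inside the orthogonal complement of $Y(\hi)$. I expect the only remaining subtlety to be the bookkeeping that correctly isolates the degenerate $\dim\ki=1$ situation (which the hypothesis rules out), so that Example~\ref{ex:rankinfinity} indeed applies.
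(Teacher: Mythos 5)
Your proposal is correct and follows essentially the same route as the paper's proof: reduce to the canonical minimal dilation $(\hi_{\vec r},\Po,Y)$, use the finite rank of $\Mo_{i_0}$ (equivalently of its Kraus operators) to exhibit an infinite-dimensional subspace of vectors $\eta\otimes e_{i_0s_0}$ annihilated by $YY^*$, and conclude via Example~\ref{ex:rankinfinity}. Your kernel formulation $Y^*(\eta\otimes e_{is})=\Ao_{is}^*\eta$ is just a repackaging of the paper's orthogonality argument with the structure vectors $\fii_{iks}$, and your explicit $\dim\ki\ge 2$ check corresponds to the paper's parenthetical remark ruling out $N=1$.
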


\begin{proof}
We may assume that $\hi=\hi_\infty$. Furthermore, since a minimal Stinespring dilation is unique up to a unitary isomorphism, we may assume that $\ki=\hi_{\vec r}$ and
\begin{eqnarray}
Y\psi
=\sum_{i=1}^N\sum_{s=1}^{r_i}\Ao_{is}\psi\otimes e_{is}
=\sum_{i=1}^N\sum_{s=1}^{r_i}\sum_{k=1}^{m_i}\<d_{ik}|\psi\>\fii_{iks}\otimes e_{is}
\end{eqnarray}
Fix any $i<N+1$ satisfying $m_i<\infty$ and fix some $s<r_i +1$. Since $\dim \hi =\infty$, there exists an infinite number of linearly independent vectors orthogonal to the vectors $\varphi_{iks}$, $k\in\{1,2,\dots,m_i\}$. Let $\eta$ be one of these vectors, so that $\eta \perp \varphi_{iks}$ for all $k\le m_i$. Then $\eta \otimes e_{is} \perp \varphi_{i'k's'} \otimes e_{i's'},$ for all $i'<N+1$, $k'<m_{i'} +1$ and $s' < r_{i'}+1$, implying that 
\begin{equation}\label{lkjsdhfkljsdfh}
YY^*(\eta \otimes e_{is}) = 0. 
\end{equation}
Since the number of linearly independent vectors  $\eta$ satisfying \eqref{lkjsdhfkljsdfh} is infinite, $\id_{\hi\otimes \hi_{\vec r}} - YY^*$ is of infinite rank and the claim follows from Example \ref{ex:rankinfinity}. (Note that if $N=1$ then $\Mo_1=\id_\infty$ with $m_1=\infty$.)
\end{proof}

We have seen that in the discrete case the unitary extension of $U$ may fail only if the instrument's associated observable is rank-$\infty$, but even then the condition is not sufficient. In physical applications this is not a serious problem, since we may add one dimension to $\hi_{\vec r}$, \textit{i.e.}, define $\ki:=\hi_{\vec r}\oplus\C\cong\hi_{\vec r}\times\C$. 
{This addition of extra dimension allows us to interpret the system-apparatus --composite as a closed quantum system where the measurement dynamics is governed by unitary evolution.
Namely,}
by defining $Y_+\psi:=(Y\psi,0)$ for all $\psi\in\hi$ 
we see that $U_+(\psi\otimes\xi_+):=Y_+\psi$, where $\xi_+\in\ki$ is a unit vector, extends to a unitary operator $U_+$ on $\hi \otimes \ki$; see the end of Sect.\,\ref{sec:isometry}.
{One possibility to extend the pointer observable $\Po$ on $\ki$, is to add an extra outcome to $\Omega_N=\{x_1,\,x_2,\ldots\}$,} 
{\it i.e.}, define
$\Omega^+_{N}:=\Omega_N\cup\{x_0\}$ and $\Sigma_N^+:=2^{\Omega^+_N}$ where $x_0\notin\Omega_N$.
Now, for all $X\in\Sigma_N$ and $(\xi,c)\in\ki$, the equations $\Po_+(X)(\xi,c):=(\Po(X)\xi,0)$ and $\Po_+(\{\omega_0\})(\xi,c):=(0,c)$ define a new sharp pointer observable
$\Po_+:\,\Sigma^+_N\to\mc L(\ki)$. The triple $(\ki, \Po_+, Y_+)$ is a  Stinespring dilation of an instrument\footnote{
That is, $\hM_+^*(X_+)(B)=Y^*_+\big(B\otimes\Po_+(X_+)\big)Y_+$, $B\in\lh$, $X_+\in\Sigma_N^+$. Especially, $\hM_+(\{\omega_0\})=0$.} $\hM_+$
whose associated observable $\Mo_+$ is a (trivial) extension of $\Mo$. Indeed, $\Mo_+(\{\omega_i\})=\Mo_i$ for all $1\le i < N+1$ and $\Mo_+(\{\omega_0\})=0$. {Note that, the extra outcome $\omega_0$ is never obtained in the measurement since the probability of $\omega_0$ in any input state of the system is zero}.

We summarize the above observations in the following corollary.

\begin{corollary}\label{cor:instru}
Let $\hi$ be a separable Hilbert space and $\hM:\,\Sigma_N\to\li\big(\th\big)$ an instrument having a discrete $N$-outcome observable $\Mo=(\Mo_1,\Mo_2,\ldots)$ associated to it. Suppose that $\big(\hi_{\vec r}, \Po, Y)$ is a minimal Stinespring dilation of $\mc I$.
\begin{itemize}
\item[(i)] If $\Mo$ is \textit{not} rank-$\infty$ observable then $(\hi_{\vec{r}}, \Po, U, \xi )$ is a minimal normal measurement realization of $\hM$. Here the unitary operator $U$ on $\hi\otimes \hi_{\vec r}$ and unit vector $\xi\in\hi_{\vec r}$ satisfy $U(\psi \otimes \xi) = Y\psi$ for all $\psi \in \hi$.
\item[(ii)] Let $\Mo$ be rank-$\infty$. 
\begin{itemize}
\item[(a)] If there exists a unitary operator $U$ on $\hi\otimes \hi_{\vec r}$ and unit vector $\xi\in\hi_{\vec r}$, such that $U(\psi \otimes \xi) = Y\psi$ for all $\psi \in \hi$, then $(\hi_{\vec{r}}, \Po, U, \xi )$ is a minimal normal measurement realization of $\hM$.  
\item[(b)] Otherwise, in the above notions, $\big(\hi_{\vec{r}}\oplus\C, \Po_+, U_+, \xi_+ \big)$ is a normal measurement realization of the extended instrument ${\mc I}_+$ for which $\hM_+(X)=\hM(X)$ for all $X\subseteq\Omega_N$ and $\hM_+(\{\omega_0\})=0$.
\end{itemize}
\end{itemize}
\end{corollary}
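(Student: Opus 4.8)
The plan is to reduce the statement to the rank criterion isolated in Example \ref{ex:rankinfinity} and then verify that criterion by exhibiting an infinite-dimensional kernel for $YY^*$. First I would normalise the setup. Section \ref{sec:isometry} already settles the extension whenever $\dim\hi<\infty$ (the finite-dimensional case (ii) there), so I may assume $\hi=\hi_\infty$. Since the minimal Stinespring dilation of $\hM$ is unique up to a unitary isomorphism, I may identify $(\ki,\Po,Y)$ with the canonical model $(\hi_{\vec r},\Po,Y)$ built above; the freedom in the choice of $\xi$ has already been shown irrelevant, so this identification is harmless. Thus I work with
\[
Y\psi=\sum_{i=1}^N\sum_{s=1}^{r_i}\sum_{k=1}^{m_i}\<d_{ik}|\psi\>\,\varphi_{iks}\otimes e_{is},\qquad \psi\in\hi,
\]
where $\{e_{is}\}$ is orthonormal and $P=YY^*$ is the projection onto $Y(\hi)$.

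Next I would confirm that $\dim\ki>1$, so that Example \ref{ex:rankinfinity} is applicable. The hypothesis supplies an index $i$ with $0<m_i<\infty$; since $N=1$ would force $\Mo_1=\id_\infty$ and hence $m_1=\infty$, the hypothesis rules out $N=1$. Therefore $N\ge 2$ and, as every $r_i\ge 1$, $\dim\hi_{\vec r}=\sum_{i=1}^N r_i\ge N\ge 2$. By Example \ref{ex:rankinfinity}, the desired pair $(U,\xi)$ exists if and only if $\operatorname{rank}P^\perp=\infty$, where $P^\perp=\id_{\hi\otimes\hi_{\vec r}}-YY^*$.

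The heart of the argument is then to produce infinitely many linearly independent vectors in $\ker(YY^*)=Y(\hi)^\perp$. I fix the index $i$ with $m_i<\infty$ together with some $s<r_i+1$, and choose $\eta\in\hi$ orthogonal to the finitely many structure vectors $\varphi_{iks}$, $k\le m_i$; such $\eta$ exist in infinite supply because $\dim\hi=\infty$ while only $m_i<\infty$ vectors are being avoided. For an arbitrary $\psi\in\hi$, orthonormality of the $e_{is}$ collapses $\<\eta\otimes e_{is}|Y\psi\>$ to the single $(i,s)$-block $\sum_{k=1}^{m_i}\<d_{ik}|\psi\>\<\eta|\varphi_{iks}\>$, which vanishes by the choice of $\eta$. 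Hence each such $\eta\otimes e_{is}\in Y(\hi)^\perp$, and choosing infinitely many linearly independent $\eta$ yields an infinite linearly independent family in $Y(\hi)^\perp$, i.e.\ $\operatorname{rank}P^\perp=\infty$. The conclusion then follows from Example \ref{ex:rankinfinity}.

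I expect the main (if modest) obstacle to be the bookkeeping in the orthogonality computation: one must verify that $\<\eta\otimes e_{is}|Y\psi\>$ really reduces to a sum over the finite range $k\le m_i$ within the fixed $(i,s)$-block, so that the finiteness of $m_i$ is exactly what allows a single $\eta$ to annihilate the whole block. The only other point demanding care is excluding the degenerate case $\dim\ki=1$, which the finite-rank hypothesis rules out via the observation that $N=1$ would force $m_1=\infty$.
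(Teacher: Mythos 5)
Your central computation is correct, and it coincides with the paper's own proof of its Theorem: you reduce the existence of $(U,\xi)$ to the rank criterion of Example \ref{ex:rankinfinity} (checking $\dim\hi_{\vec r}\ge 2$ by ruling out $N=1$, exactly as the paper does parenthetically), and you exhibit infinitely many linearly independent vectors $\eta\otimes e_{is}$ annihilated by $YY^*$ by choosing $\eta$ orthogonal to the finitely many structure vectors $\varphi_{iks}$, $k\le m_i$, of a finite-rank effect $\Mo_i$. Down to the choice of witnesses, this is the argument the paper gives for its Theorem.

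The gap is that this establishes only the unitary-extension step, whereas the statement to be proved is the full Corollary. Three pieces are missing. First, even in part (i) you must verify that $(\hi_{\vec r},\Po,U,\xi)$ is a normal measurement \emph{realization} of $\hM$ and that it is \emph{minimal}: realization follows from $UY_\xi=Y$, which gives $\hM_i^*(B)=Y^*(B\otimes\Po_i)Y=Y_\xi^*U^*(B\otimes\Po_i)UY_\xi$, but minimality is not automatic --- the paper invokes the fact (Theorem 3 of \cite{Pel2013II}) that every normal measurement realizing $\hM$ has apparatus dimension at least $\sum_{i=1}^N r_i=\dim\hi_{\vec r}$, and your proposal never addresses minimality at all. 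Second, part (ii)(a), the conditional statement in the rank-$\infty$ case, requires the same realization and minimality verification and is not discussed. Third, part (ii)(b) is entirely absent: one must construct the enlarged apparatus $\hi_{\vec r}\oplus\C$, the isometry $Y_+\psi=(Y\psi,0)$, the extended pointer $\Po_+$ on $\Omega_N\cup\{\omega_0\}$, show that $U_+(\psi\otimes\xi_+)=Y_+\psi$ now \emph{does} extend to a unitary operator (this uses the end of Sect.\,\ref{sec:isometry}: after adding one dimension both relevant orthocomplements are separable and infinite dimensional, hence unitarily isomorphic), and verify that the resulting model realizes the extended instrument $\hM_+$ with $\hM_+(\{\omega_0\})=0$. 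Without these pieces your argument proves the paper's Theorem, not the Corollary.
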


\begin{remark}\label{unique} It has been proven in \cite{Pel2013II}, that the minimal normal measurement realizations are unique up to obvious unitary transformations. Furthermore, minimal normal measurement model can be isometrically embedded in any normal measurement model of the same device; see Theorem 3 in Ref.\,\cite{Pel2013II}. Therefore, we interpret minimal normal measurement models as effective descriptions of normal measurement protocols that contain only  essential measurement degrees of freedom.   
\end{remark}

\begin{example}\label{ex:rankinfinity2}
In the case of $N=1$, the instrument $\hM$ is a quantum channel and its associated observable is simply $\Mo_1=\id_d$, with the rank $m_1=d$.
Now one can choose $d_{1k}=g_{1k}=h_k$ and Eq.\,\eqref{asdfasdf} is a minimal Kraus decomposition of the channel $\hM_1$:
\begin{eqnarray}
\hM_1^*(B)=\sum_{s=1}^{r_1}\Ao_{1s}^*B\Ao_{1s},\qquad B\in\li(\hi_d), 
\end{eqnarray}
where $\Ao_{1s}=\sum_{k=1}^{d}\kb{\fii_{1ks}}{h_{k}}$ and $\sum_{s=1}^{r_1}\<\fii_{1ks}|\fii_{1\ell s}\>=\delta_{k\ell}$.
Moreover, $U(h_k\otimes\xi)=Y h_k=
\sum_{s=1}^{r_1}\fii_{1ks}\otimes e_{1s}$.
If $m_1=d=\infty$, it may happen that $U$ does not extend to a unitary operator. Especially, in the case $r_1=1$ one sees this clearly. Namely, then $U(h_k\otimes\xi)=\fii_{1k1}\otimes e_{11}$, where $\xi=te_{11}$, $t\in\mathbb T$,
so that $U$ is unitary if and only if the orthonormal set $\{\fii_{1k1}\}_{k=1}^\infty$ is a basis of $\hi_\infty$; compare to Example \ref{ex:lifting}.
In this case, $\Ao_{11}$ is unitary and $\hM_1$ is a unitary channel.
\end{example}

Together with Corollary \ref{cor:instru}, the above example has implications, for instance, in the field of open quantum systems. Namely, given a quantum channel $\mc E:\th\rightarrow\th$ these allow one to solve a minimal environment $\ki$ and composite unitary channel, from which $\mc E$ can be reduced, so that
\begin{eqnarray}\label{eq:channel}
\mc E(T) = \ptr{U \big(T\otimes \kb\xi\xi\big) U^*}, \qquad T \in \th.
\end{eqnarray}
Now $\dim\ki=\text{rank}(\mc E)$ if $\dim\hi<\infty$ and $\dim\ki$ is either $\text{rank}(\mc E)$ or $\text{rank}(\mc E)+1$ if $\dim\hi=\infty$
as described in Corollary \ref{cor:instru}. 
In the view of Remark \ref{unique}, $\ki$ may be interpreted as an \textit{effective environment} which can be isometrically embedded into any other environment inducing $\mc E$ in the above form \eqref{eq:channel}. 


\subsection{Observables}

We will next examine normal measurement models $(\ki, \Po, U, \xi)$ of a discrete $N$-outcome observable $\Mo$ in a separable Hilbert space. As shown in Corollary \ref{cor:instru}, the size of the apparatus is related to the number of outcomes $N$ of the observable and the Kraus ranks $r_i$ corresponding to the inducing instrument via $\dim \ki \geq \sum_{i=1}^N r_i$. Since we are interested in finding a minimal normal measurement, we may assume that the instrument is rank-1. We denote $e_{i1}=e_i$ and $\vec1=(1,1,1,\ldots)$ (with $N$ elements) so that $\dim\hi_{\vec 1}=N$. In this case Eq.\,(\ref{memo:unitary}) simplifies to 
\begin{eqnarray}\label{memo:rank1unitary}
U(\psi\otimes\xi)=Y\psi=\sum_{i=1}^N\Ao_{i}\psi\otimes e_{i}
=\sum_{i=1}^N\sum_{k=1}^{m_i}\<d_{ik}|\psi\>\fii_{ik}\otimes e_{i},
\end{eqnarray}
for all $\psi\in\hi_d$, where the instrument's structure vectors $\fii_{ik}$ satisfy the orthogonality relation $\< \fii_{ik}|\fii_{i\ell}\>=\delta_{k\ell}$.
As we have seen, $U$ always extends to a unitary operator, when $d<\infty$.
The following example illustrates, that in the case of $d=\infty$ (and $m_i\equiv\infty$), the extension problem is related to the choice of the vectors $\fii_{ik}$.

\begin{example}\label{ex:structurevectors}
(i) Let $\Mo: \Sigma_N \rightarrow \mc L(\hi_d)$ be an $N$-outcome {\it sharp} observable. Note that because $\Mo$ is sharp, $\{d_{ik}\}_{i,k}$ forms an orthonormal basis of $\hi_d$. By choosing $\fii_{ik}=d_{ik}$ the operator $U$ defined in Eq.\,(\ref{memo:rank1unitary}) extends to a unitary operator and one gets minimal, so called \textit{von Neumann--L\"uders model} for $\Mo$ \cite{QTM96}. Indeed, now $U(\psi \otimes \xi)=\sum_{i=1}^N\sum_{k=1}^{m_i}\<d_{ik}|\psi\>d_{ik}\otimes e_{i} = Y \psi$ for all $\psi\in\hi_d$ showing that $Y= \sum_{i=1}^N \sum_{k=1}^{m_i}  |d_{ik} \otimes e_i\> \<d_{ik}|$. Then $P=YY^*=\sum_{i=1}^N \sum_{k=1}^{m_i}  |d_{ik}\>\<d_{ik}| \otimes |e_i\>\<e_i| = \sum_{i=1}^N \Mo_i \otimes P[e_i]$ implying that 
\begin{eqnarray} 
P^\perp &=& \id_{\hi_d \otimes \hi_{\vec{1}}} - P = \sum_{i=1}^N \Mo_i \otimes (\id_{\hi_{\vec{1}}}-P[e_i]) \nonumber\\
&=&  \sum_{i=1}^N \Mo_i \otimes P[e_i]^\perp.  
\end{eqnarray} 
This with the property $\Mo_i\Mo_j=\delta_{ij}\Mo_i$ further imposes that 
$\rnk{P^\perp} = \sum_{i=1}^N \rnk{\Mo_i} \cdot \rnk{P[e_i]^\perp} =  \sum_{i=1}^N m_i (N-1) = d(N-1)$.
By Examples \ref{ex:rankinfinity} and \ref{ex:rankinfinity2}, we know that this always implies that $U$ extends to unitary on $\hi_d \otimes \hi_{\vec{1}}$.

(ii) Let $\Mo$ be a rank-$\infty$ sharp observable with $N>1$ and let $\{h_n\}_{n=1}^\infty$ be an orthonormal basis of $\hi_\infty$. If we choose $\fii_{ik}=h_k$ then
$$Y = \sum_{i=1}^N \sum_{k=1}^{m_i} |h_{k}\otimes e_{i}\rangle \langle d_{ik}|$$ and we see that $U$ does not extend to a unitary operator. Indeed, now $P=YY^*=  \sum_{i,k} |h_{k}\otimes e_{i}\rangle \langle h_{k}\otimes e_{i}| = \id_{\hi_{\infty} \otimes \hi_{\vec{1}}}$ implying that $P^\perp=0$ and the claim follows from Example \ref{ex:rankinfinity}.
\end{example}


\begin{example}\
Let $d=\infty$ and $\Mo: \Sigma_2 \rightarrow \mc L(\hi_\infty)$ a $2$-outcome observable such that $\Mo_1$ has a discrete spectrum.
Then there exists an orthonormal basis $\{h_n\}_{n=1}^\infty$ such that
$\Mo$ is of the form $\Mo_1 = \sum_{k=1}^\infty \lambda_{1k} |h_k\rangle \langle h_k|$ and $\Mo_2 = \id_{\infty} - \Mo_1 = \sum_{k=1}^\infty \lambda_{2k} |h_k \rangle \langle h_k|$ with $\lambda_{2k}=1-\lambda_{1k}$, $\lambda_{1k} \in [0,1]$ for all $k$. Note that rank of $\Mo_i$ is the number of non-zero numbers $\lambda_{ik}$, $k\ge1$.
We are interested in the case where $m_1=m_2=\infty$. For example, this holds when
there exists an infinite number of `eigenvalues' $0<\lambda_{1k}<1$ of $\Mo_1$ which is assumed next.
Note that, if $\Mo_1$ is a projection, then $\Mo_2=\Mo_1^\perp$ and each $\lambda_{1k}$ (and thus $\lambda_{2k}$) is either 0 or 1.

Define $\hi_{\vec{1}}=\C e_1+\C e_2\cong \C^2$ as above and fix some unit vector $\xi \in \hi_{\vec{1}}$. Furthermore, define an operator $U: \hi_\infty \otimes \C \xi \rightarrow \hi_\infty \otimes \hi_{\vec{1}}$ as in Eq.\,(\ref{memo:rank1unitary}), that is, 
\begin{eqnarray}
U(\psi \otimes \xi) &=& \sum_{i=1}^2 \sum_{k=1}^\infty \sqrt{\lambda_{ik}} \ip{h_k}{\psi} \varphi_{ik}\otimes e_i \nonumber \\
&=&  \sum_{k=1}^\infty  \ip{h_k}{\psi} \left(\sqrt{\lambda_{1k}} \varphi_{1k}\otimes e_1+ \sqrt{1-\lambda_{1k}} \varphi_{2k}\otimes e_2\right) \nonumber\\
&=& Y \psi, \qquad \psi \in \hi_\infty,
\end{eqnarray}
where the vectors $\fii_{ik}\in\hi_\infty$ are such that $\< \fii_{ik}|\fii_{i\ell}\>=\delta_{k\ell}$ for all $k,\ell\in\{1,2,\dots\}$ satisfying $\lambda_{ik}\lambda_{i\ell}\ne 0$.
Due to the factors $\sqrt{\lambda_{ik}}$ above, without restricting generality, we set $\fii_{ik}:=0$ if $\lambda_{ik}=0$.
Denote  $\Psi_{ik}:= \varphi_{ik}\otimes e_i$. Now
$Y = \sum_{k=1}^\infty  |\sum_{i=1}^2 \sqrt{\lambda_{ik}} \Psi_{ik}\rangle\langle h_k|$ and therefore $YY^* = \sum_{k=1}^\infty\sum_{i,j=1}^2  |\sqrt{\lambda_{ik}} \Psi_{ik}\rangle\langle \sqrt{\lambda_{jk}} \Psi_{jk}|$.
Since for all $k,\,\ell=1,2,\ldots$, 
$
\ip{ \sum_{j=1}^2 \sqrt{\lambda_{jk}} \Psi_{jk} \, } {\sqrt{\lambda_{2\ell}}\Psi_{1\ell}-\sqrt{\lambda_{1\ell}}\Psi_{2\ell}}=0
$, then also $ YY^*(\sqrt{\lambda_{2\ell}} \Psi_{1\ell}-\sqrt{\lambda_{1\ell}} \Psi_{2\ell})=0 $ and
we conclude that the rank of $P^\perp:=\id_{\infty} - Y Y^*$ is infinite. Therefore by Example \ref{ex:rankinfinity}, $U$ can be extended to a unitary operator $U$ on $\hi_\infty \otimes \hi_{\vec{1}}$ regardless of the choice of vectors $\fii_{ik}$. We immediately observe that this condition is due to infinite amount of numbers  $0<\lambda_{1k}<1$ of $\Mo_1$. In general this is not a necessary condition for $\Mo$ and  Example \ref{ex:structurevectors} (ii) shows that the extendability of $U$ strongly depends on the choice of vectors $\fii_{ik}$.
\end{example}

To end this section, we construct a minimal normal measurement model of an arbitrary rank-$\infty$ discrete $N$-outcome observable $\Mo$. Let us fix a basis $\{h_k\}_{k=1}^\infty$ of $\hi_\infty$ and choose the vectors $\fii_{ik}:=h_{2k}$ for all  $i<N+1$ and $k\in\{1,2,\ldots\}$. Obviously these vectors may be used to define an isometry $Y$ via Eq.\,(\ref{memo:rank1unitary}) and a rank-1 instrument, that has $\Mo$ as the associate observable.
Now $YY^* (h_{2k+1} \otimes e_i) =0$ for all $k$ proving that $\id_{\hi_\infty \otimes \hi_{\vec{1}}} - YY^*$ has infinite rank and therefore the operator $U$ defined in Eq.\,\eqref{memo:rank1unitary} has an unitary extension; see Example \ref{ex:rankinfinity}. We summarize the observations of this subsection in the following proposition.
\begin{proposition}\label{prop:observables}
Let $\Mo:\Sigma_N \rightarrow \lh$ be a discrete $N$-outcome observable. There exists a normal $\Mo$-measurement $( \hi_{\vec{1}}, \Po, U, \xi )$ such that $\dim \hi_{\vec{1}}=N$. {Furthermore, the apparatus' Hilbert space $\ki$ of any normal $\Mo$-measurement satisfies $\dim \ki \geq N$.}
\end{proposition}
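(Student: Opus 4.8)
The plan is to handle the two assertions of the proposition in turn. The lower bound $\dim\ki\ge N$ is the easy half and holds for \emph{every} normal $\Mo$-measurement, while the existence of a model with $\dim\hi_{\vec 1}=N$ requires exhibiting one good choice of the inducing rank-$1$ instrument.

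For the lower bound, let $(\ki,\Po,U,\xi)$ be an arbitrary normal $\Mo$-measurement. By Eq.~\eqref{obs:dilation} each effect satisfies $\Mo_i=Y_\xi^*U^*(\id_\hi\otimes\Po_i)UY_\xi$, where $\Po_i:=\Po(\{x_i\})$. Because $\Po$ is a sharp pointer, $\Po_1,\dots,\Po_N$ are mutually orthogonal projections on $\ki$ with $\sum_{i=1}^N\Po_i=\id_\ki$. The standing reduction $\Mo_i\ne0$ forces each $\Po_i\ne0$, since $\Po_i=0$ would give $\Mo_i=0$. Hence we have $N$ nonzero, mutually orthogonal projections, and counting dimensions in $\sum_{i}\Po_i=\id_\ki$ yields $\dim\ki=\sum_{i=1}^N\rnk{\Po_i}\ge N$.

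For the existence statement I would construct a rank-$1$ instrument whose associated observable is $\Mo$; its Kraus rank vector is then $\vec 1$ and $\dim\hi_{\vec 1}=N$ by definition. The only thing left to secure is that the operator $U$ of Eq.~\eqref{memo:rank1unitary} extends to a unitary on $\hi\otimes\hi_{\vec 1}$, and the one degree of freedom available is the choice of the structure vectors $\fii_{ik}$. I would argue by cases. If $d<\infty$, the analysis of Sect.~\ref{sec:isometry} guarantees the extension for any admissible choice, so any rank-$1$ instrument does the job. If $d=\infty$ and $N=1$, then $\Mo_1=\id_\infty$ and the identity channel ($\Ao_1=\id_\infty$, which is unitary) already gives $\dim\hi_{\vec 1}=1=N$.

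The remaining case $d=\infty$, $N>1$ is the genuine obstacle, since Example~\ref{ex:structurevectors}(ii) shows that a careless choice of $\fii_{ik}$ can collapse $P^\perp$ to $0$ and destroy unitarity. Here I would fix an orthonormal basis $\{h_k\}_{k=1}^\infty$ of $\hi_\infty$ and set $\fii_{ik}:=h_{2k}$ for all $i<N+1$ and $k<m_i+1$; for each fixed $i$ these vectors are orthonormal, as needed for a legitimate rank-$1$ instrument. The resulting isometry $Y$ then has range inside the closed linear span of the vectors $h_{2k}\otimes e_i$, so every $h_{2k+1}\otimes e_i$ lies in $\ker(YY^*)=Y(\hi_\infty)^\perp$. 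Consequently $P^\perp=\id_{\hi_\infty\otimes\hi_{\vec 1}}-YY^*$ has infinite rank, and since $N>1$ Example~\ref{ex:rankinfinity} delivers the required unitary extension. The key idea, and the reason this case is delicate, is that reserving the odd part of the basis keeps an infinite-dimensional subspace free to populate the kernel of $YY^*$, which is exactly what Example~\ref{ex:rankinfinity} needs.
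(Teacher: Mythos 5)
Your proof is correct, and its core construction --- reserving the even-indexed basis vectors by setting $\fii_{ik}:=h_{2k}$, so that every $h_{2k+1}\otimes e_i$ lies in $\ker(YY^*)=Y(\hi_\infty)^\perp$, and then invoking Example \ref{ex:rankinfinity} --- is exactly the paper's. The differences are in how the two halves are packaged, and they are worth recording. For the lower bound $\dim\ki\ge N$ the paper relies on the dilation-theoretic inequality $\dim\ki\ge\sum_{i=1}^N r_i$ imported from \cite{Pel2013II}, whereas you argue directly that a sharp pointer yields $N$ nonzero, mutually orthogonal projections $\Po_i$ with $\sum_{i=1}^N\Po_i=\id_\ki$, so $\dim\ki=\sum_{i=1}^N\rnk{\Po_i}\ge N$; this is self-contained and more elementary. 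For the existence half, the paper applies the $h_{2k}$-construction only to rank-$\infty$ observables and covers the remaining $d=\infty$ cases by its Theorem 1 (through Corollary \ref{cor:instru}(i)); you apply the same construction uniformly to every observable with $d=\infty$ and $N>1$, which is legitimate since the argument nowhere uses $m_i=\infty$, and it removes the dependence on Theorem 1 altogether. Finally, your separate treatment of the case $N=1$, $d=\infty$ (identity channel, $\Ao_1=\id_\infty$) is not mere tidiness but necessary, and it quietly repairs a slip in the paper's own summary: for $N=1$ the observable $\Mo_1=\id_\infty$ is automatically rank-$\infty$, yet Example \ref{ex:rankinfinity} requires $\dim(\hi_B)>1$, so the choice $\fii_{1k}=h_{2k}$ would produce precisely the non-extendable shift isometry of Example \ref{ex:lifting} --- there, infinite rank of $P^\perp$ is the obstruction rather than the cure. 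Your case split avoids this trap, so your write-up is, if anything, slightly more careful than the paper's.
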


In connection to the above result, a complete characterization of the minimal normal measurements of discrete $N$-outcome sharp observables has been given in \cite{BelCasLah90}. Furthermore, it has been proven in \cite[Prop.\,10]{HeiTuk2014} that, if $\Mo$ is a discrete $N$-outcome sharp observable, then $N$ is actually the minimal dimension of any, \textit{i.e.}, not necessarily normal, $\Mo$-measurement. In the case of non-sharp observables one cannot draw the same conclusion. 
For instance, any trivial observable $\Mo(\{ x_i\}) = p(\{ x_i\}) \, \id_\hi$, where $p: \Sigma_N \rightarrow [0,1]$ is a discrete probability measure, can be realized with one-dimensional non-normal measurement model $(\C, p, \id_{\hi \otimes\C}, \xi)$, $\xi \in \mathbb T$, whereas each minimal normal measurement model is $N$-dimensional.

\section{Minimal normal measurement models: the continuous case}\label{sec:continuous}
In this  section, we focus on normal measurements of an arbitrary observable and briefly study the implications of the unitary extension problem in this context. Throughout this section, we assume that $\hi$ is a separable Hilbert space and we let $\hM : \Sigma\to\li\big(\th\big)$ be an instrument and $\Mo: \Sigma \rightarrow \lh$ be its associated observable.

Let $(\hd', \Po', Y)$ (resp.\ $(\hd, \Po, J)$) be a minimal Stinespring dilation for $\hM$ (resp.\ a minimal Naimark dilation for $\Mo$),
where $\hd'=\int_\Omega^\oplus\hi'_{r(x)}\d\mu(x)$ and $\hd=\int_\Omega^\oplus\hi_{m(x)}\d\mu(x)$
are \textit{direct integral} Hilbert spaces and $\mu:\,\Sigma\to[0,1]$ is a probability measure\footnote{More generally, $\mu$ can be chosen to be any $\sigma$-finite measure,  such that $\mu$ and $\Mo$ are mutually absolutely continuous.} such that $\mu$ and $\Mo$ are mutually absolutely continuous \cite{Pel2013II}. For instance, one may choose $\mu$ as a mapping $X\mapsto \tr{\Mo(X) \, \rho}$ where $\rho$ is a \textit{faithful state}, \textit{i.e.}, does not have eigenvalue $0$.
Moreover, $\Po$ and $\Po'$ are the \textit{canonical spectral measures} of the corresponding direct integral Hilbert spaces, \textit{i.e.}, of the form $X \mapsto \hat{\chi}_X$ (the characteristic function of $X$).
The dimension $m(x)\le\dim\hi$ of the \textit{fiber} Hilbert space $\hi_{m(x)}$ is called the multiplicity of the outcome $x\in\Omega$ and 
the dimension $r(x)$ of $\hi'_{r(x)}$ is the pointwise Kraus rank of the instrument $\hM$ \cite{Pel2013I}.
We say that $\Mo$ is {\it rank-$\infty$} if $m(x)=\infty$ for $\mu$-almost all $x\in\Omega$ (and thus $\dim\hi=\infty$).

Since $\hM^*(X)(B) = Y^* \left(  B \otimes \Po(X)\right) Y$ for all $X\in\Sigma$ and $B\in\lh$, one gets
\begin{eqnarray}\label{cont:isom}
\Mo(X) = J^* \Po(X) J=Y^* \left(  \id_\hi \otimes \Po'(X)\right) Y, \qquad X \in \Sigma
\end{eqnarray}
and there exists an unique decomposable isometry $C:\,\hd\to\hi\otimes\hd'$,
\begin{eqnarray}\label{cont:decomp}
C=\int_\Omega^\oplus C_x\d\mu(x),
\end{eqnarray}
such that $CJ=Y$, where $C_x:\,\hi_{m(x)}\to\hi\otimes\hi'_{r(x)}$ is an isometry for all $x\in\Omega$  \cite[Theorem 1]{Pel2013II}.
Now the question is does 
\begin{eqnarray}\label{cont:unitary}
U(\psi\otimes\xi)=Y\psi=CJ\psi, \qquad \psi\in\hi,
\end{eqnarray}
extend to a unitary operator on $\hi\otimes\hd'$. As we have seen before, the choice of the unit vector $\xi\in\hd'$ is irrelevant.
Moreover, the extension is always possible, if $\dim\hi<\infty$ or $\hd'$ is not separable. Recall that $\hd'$ is separable if and only if the measure space $(\Omega,\Sigma,\mu)$ has a countable basis \cite[Remark 2]{Pel2013I}. This always holds when the measurable space $(\Omega,\Sigma)$ is countably generated, \textit{e.g.}, $\Sigma$ is the Borel $\sigma$-algebra of a locally compact second countable Hausdorff space $\Omega$.

\begin{theorem}\label{thm:cont}
Let $\hi$ be a separable Hilbert space and $\mc I: \Sigma \rightarrow \mc L (\th)$ an instrument with the associate observable $\Mo:\Sigma\rightarrow\lh$, and let $\big(\hd', \Po', Y\big)$ be a minimal Stinespring dilation of $\mc I$. If $\Mo$ is \textit{not} rank-$\infty$, then for any unit vector $\xi\in\hd'$ there exists a unitary $U$ operator on $\hi \otimes \hi_{\oplus}'$, such that $U(\psi \otimes \xi) = Y \psi$ for all $\psi \in \hi$. 
\end{theorem}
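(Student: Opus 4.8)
The plan is to reduce the statement, exactly as in the discrete case treated in Section~\ref{sec:discrete}, to showing that the projection $P^\perp := \id_{\hi\otimes\hd'} - YY^*$ onto $Y(\hi)^\perp$ has infinite rank, and then to invoke Example~\ref{ex:rankinfinity}. First I would dispose of the trivial reductions: by the discussion preceding the theorem the unitary extension is automatic when $\dim\hi<\infty$ or when $\hd'$ is non-separable, so I may assume $\dim\hi=\infty$ and $\hd'$ separable. I would also note that $\dim\hd'>1$, for otherwise a one-dimensional $\hd'$ would force a single atom with $r=1$ and hence $m=\dim\hi=\infty$, \emph{i.e.} $\Mo$ would be the single-outcome rank-$\infty$ observable $\id_\hi$, contrary to hypothesis. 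Thus $\dim[\hi\otimes(\C\xi)]^\perp=\infty$ and Example~\ref{ex:rankinfinity} applies once $\mathrm{rank}\,P^\perp=\infty$ is established; since the irrelevance of the choice of $\xi$ was already shown in Section~\ref{sec:isometry}, this yields the claim for every unit vector $\xi$.

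The heart of the argument uses the decomposable isometry $C=\int_\Omega^\oplus C_x\,\d\mu(x)$ with $CJ=Y$. Because $Y(\hi)=C(J(\hi))\subseteq C(\hd)$, one has $C(\hd)^\perp\subseteq Y(\hi)^\perp$, and hence $\mathrm{rank}\,P^\perp=\dim Y(\hi)^\perp\ge\dim C(\hd)^\perp$. Since $C$ is a (decomposable) isometry its range is closed with range projection $CC^*=\int_\Omega^\oplus C_xC_x^*\,\d\mu(x)$, so $C(\hd)^\perp$ is again a direct integral, $C(\hd)^\perp=\int_\Omega^\oplus\mc W_x\,\d\mu(x)$ with fibers $\mc W_x:=(\hi\otimes\hi'_{r(x)})\ominus\mathrm{ran}\,C_x$. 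It therefore suffices to prove $\dim C(\hd)^\perp=\infty$.

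This is where the hypothesis enters. Since $\Mo$ is not rank-$\infty$, the measurable set $B:=\{x\in\Omega\,|\,m(x)<\infty\}$ has $\mu(B)>0$. For $x\in B$ the range of the isometry $C_x$ is $m(x)$-dimensional, while $\dim(\hi\otimes\hi'_{r(x)})=\dim\hi\cdot r(x)=\infty$ (note $r(x)\ge1$ wherever $m(x)\ge1$), so every fiber $\mc W_x$ is infinite dimensional. The plan is to globalise this pointwise infiniteness: using measurable selection for direct integrals, choose a measurable field of orthonormal vectors $\{w_j(x)\}_{j\ge1}\subseteq\mc W_x$ over $B$ and set $W_j:=\mu(B)^{-1/2}\,\CHI B\,w_j$. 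A direct computation gives $\ip{W_j}{W_k}=\delta_{jk}$ and, for every $\phi\in\hd$, $\ip{W_j}{C\phi}=\mu(B)^{-1/2}\int_B\ip{w_j(x)}{C_x\phi(x)}\,\d\mu(x)=0$ since $w_j(x)\perp\mathrm{ran}\,C_x$ pointwise. Hence $\{W_j\}_{j\ge1}$ is an infinite orthonormal subset of $C(\hd)^\perp\subseteq Y(\hi)^\perp$, so $\mathrm{rank}\,P^\perp=\infty$ and Example~\ref{ex:rankinfinity} furnishes the required unitary $U$.

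The main obstacle I anticipate is the direct-integral step: justifying that the fiberwise complements $\mc W_x$ genuinely assemble into $C(\hd)^\perp$ and, in particular, producing the measurable orthonormal field $\{w_j(x)\}$ over $B$. Everything else is dimension counting; the delicate point is the measurability bookkeeping underlying the selection, which relies on the separability of $\hd'$ (equivalently, a countable basis for the measure space). I would also take care that ``not rank-$\infty$'' is used \emph{only} to guarantee $\mu(B)>0$: on the complementary rank-$\infty$ regime the fibers $\mc W_x$ can collapse entirely, as Example~\ref{ex:structurevectors}(ii) shows, which is precisely why the theorem must exclude that case.
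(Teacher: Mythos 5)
Your proposal is correct and follows essentially the same route as the paper's own proof: both reduce the claim (via Example \ref{ex:rankinfinity}) to showing $\mathrm{rank}\,P^\perp=\infty$, replace $P=YY^*=CJJ^*C^*$ by the larger decomposable projection complement $Q^\perp$ with $Q:=CC^*$, and use a positive-measure set of finite multiplicity to construct infinitely many orthonormal vectors, supported on that set and pointwise orthogonal to $\mathrm{ran}\,C_x$, lying in $Q^\perp$. The only cosmetic differences are that you work on $B=\{x\,|\,m(x)<\infty\}$ where the paper fixes a constant-multiplicity level set $\Omega_m=\{x\,|\,m(x)=m\}$ of positive measure (which slightly simplifies the measurable selection of the orthonormal field), and you rule out $\dim\hd'=1$ by a direct argument where the paper appeals to Example \ref{ex:rankinfinity2}.
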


\begin{proof}
We assume, without loss of generality, that $\dim \hi = \infty$. 
If $\Mo$ is not rank-$\infty$, then there exists a positive integer $m<\infty$, such that the $\mu$-measurable set $\Omega_m:=\{x\in\Omega|\,m(x)=m\}$ has a (finite) positive measure, $\mu(\Omega_m)>0$. 
Fix a minimal Naimark dilation $(\hd, \Po, J)$ of $\Mo$ so that Eqs.\,\eqref{cont:isom} and \eqref{cont:decomp} hold. By Examples \ref{ex:rankinfinity} and \ref{ex:rankinfinity2} one needs to show that ${\rm rank}\, P^\perp=\infty$ where
\begin{eqnarray}
P:=YY^*=CJJ^*C^*\le CC^*=:Q.
\end{eqnarray}
Hence ${Q^\perp \leq P^\perp}$ and it is enough to show that ${\rm rank}\, Q^\perp=\infty$.
Now $Q=\int_\Omega^\oplus C_xC_x^*\d\mu(x)$ is a decomposable projection acting in $\hi\otimes\hd'\cong\int_\Omega^\oplus(\hi\otimes\hi'_{r(x)})$ where
the fibers $\hi\otimes\hi'_{r(x)}$ are infinite dimensional for $\mu$-almost all $x\in\Omega$.
Note that, for any $x\in\Omega_m$, the image $C_x\left(\hi_{m(x)}\right)$ of the isometry $C_x$ is an $m$-dimensional subspace of $\hi\otimes\hi'_{r(x)}$ which implies that 
the rank of the projection $C_xC_x^*$ is $m<\infty$.
Choose then a (weakly) $\mu$-measurable field of orthonormal bases $\{e_k(x) |\, x\in\Omega\}_{k=1}^\infty$ of $\hi\otimes\hd'$ such that
$C_xC_x^*=\sum_{k=1}^m\kb{e_k(x)}{e_k(x)}$ for all $x\in\Omega_m$ and define orthonormal  vectors $\eta_\ell\in\hi\otimes\hd'$, $\ell\in\{1,2,\ldots\}$, by
\begin{eqnarray}
\eta_\ell(x)=
\begin{cases}
\mu(\Omega_m)^{-1}e_{m+\ell}(x) , & x\in\Omega_m \\
0, & x\in\Omega\setminus\Omega_m .\\
\end{cases}
\end{eqnarray}
Since for all $\ell\in\{1,2,\ldots\}$ and $x\in\Omega$ one gets $(Q\eta_\ell)(x)=C_xC_x^*\eta_\ell(x)=0$, that is, $Q^\perp\eta_\ell=\eta_\ell$, we conclude that 
${\rm rank}\, Q^\perp=\infty$ and  $U$ extends to a unitary operator.
\end{proof}

Let $\Mo$ be rank-$\infty$ (and hence $\dim\hi=\infty$). If $\hd'$ is separable, then it may happen that $U$ does not extend to unitary operator on $\hi \otimes \hi_{\oplus}'$; see Example \ref{ex:structurevectors}.
Analogously to before, this problem may be solved by adding one dimension to $\hd'$, \textit{i.e.},
by defining $\ki:=\hd'\oplus\C\cong\hd'\times\C$ and
$Y_+\psi:=(Y\psi,0)$ for all $\psi\in\hi$,
we see that 
$U_+(\psi\otimes\xi_+)=Y_+\psi$ extends to a unitary operator $U_+$ on $\hi \otimes \ki$ for any unit vector $\xi_+\in\ki$.
In addition, let $x_0$ be such that $x_0\notin\Omega$, define $\Omega_+:=\Omega\cup\{x_0\}$, and let $\Sigma_+$ be the smallest $\sigma$-algebra of $\Omega_+$ containing $\Sigma$. Now
$\Po'_+(X)(\xi,c):=(\Po'(X)\xi,0)$ and $\Po'_+(\{x_0\})(\xi,c):=(0,c)$, for all $X\in\Sigma$ and $(\xi,c)\in\ki$,
determines a PVM $\Po'_+:\,\Sigma_+\to\kh$ and we get a Stinespring dilation $(\ki, \Po'_+, Y_+)$ for an instrument $\hM_+$ such that $\hM_+(X)=\hM(X)$, $X\in\Sigma$, and $\hM_+(\{x_0\})=0$.
Moreover, the associate observable satisfies conditions $\Mo_+(X)=\Mo(X)$, $X\in\Sigma$, and $\Mo_+(\{x_0\})=0$.

We have the following corollary:
\begin{corollary}\label{cor:continuous} Let $\hi$ be a separable Hilbert space and $\mc I: \Sigma \rightarrow \mc L \big(\th\big)$ an instrument having an observable $\Mo:\Sigma\rightarrow\lh$ associated to it and let $\big(\hd', \Po', Y\big)$ be a minimal Stinespring dilation for $\mc I$.
\begin{itemize}
\item[(i)] If $\Mo$ is not rank-$\infty$, then $\big(\hd', \Po', U, \xi \big)$ is a minimal normal measurement realization of $\mc I$, where $U$ is a unitary operator on $\hi \otimes \hd'$ satisfying $U(\psi \otimes \xi) = Y\psi$, for all $\psi \in \hi$, and $\xi\in\hd'$ is a unit vector.  
\item[(ii)] Let $\Mo$ be rank-$\infty$.
\begin{itemize} \item[(a)]If there exists a unitary operator $U$ on $\hi \otimes \hd'$ and a unit vector $\xi\in\hd'$ such that $U(\psi \otimes \xi) = Y\psi$ for all $\psi \in \hi$, then $\big(\hd', \Po', U, \xi \big)$ is a minimal normal measurement realization of $\mc I$. 
\item[(b)]Otherwise, in the above notions, $\big(\hd'\oplus\C, \Po_+', U_+, \xi_+ \big)$ is a normal measurement realization of 
the extended instrument ${\mc I}_+$ for which $\hM_+(X)=\hM(X)$ for all $X\in \Sigma$ and $\hM_+(\{x_0\})=0$.
\end{itemize} 
\end{itemize}
\end{corollary}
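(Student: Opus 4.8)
The plan is to assemble the corollary from results already in place, since the analytic content lives in Theorem \ref{thm:cont} and in the dimension-augmentation construction carried out immediately before the statement. The single identity that drives all three cases is $U Y_\xi = Y$ (with $Y_\xi \psi := \psi \otimes \xi$), which merely restates the defining relation $U(\psi \otimes \xi) = Y\psi$. Given this, probability reproducibility is checked by a one-line substitution into the normal-model Stinespring form, Eq.\,(\ref{prel:instrudilation}): the instrument realized by $(\hd', \Po', U, \xi)$ sends $B$ to $Y_\xi^* U^* (B \otimes \Po'(X)) U Y_\xi = (U Y_\xi)^*(B \otimes \Po'(X))(U Y_\xi) = Y^*(B \otimes \Po'(X)) Y = \hM^*(X)(B)$, where the last equality is the Stinespring dilation property of $(\hd', \Po', Y)$. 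Hence the tuple is a normal realization of $\mc I$, the pointer $\Po'$ being sharp by construction as the canonical spectral measure of the direct integral.

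For part (i) I would first invoke Theorem \ref{thm:cont}: since $\Mo$ is not rank-$\infty$, that theorem supplies a unitary $U$ on $\hi \otimes \hd'$ for any pre-fixed unit vector $\xi \in \hd'$, and the realization follows from the computation above. Part (ii)(a) is verbatim the same, with the existence of $U$ promoted from conclusion to hypothesis. Both tuples use precisely the minimal Stinespring space $\hd'$ as apparatus, so minimality is immediate: by Theorem 3 of Ref.\,\cite{Pel2013II}, every normal realization of $\mc I$ has apparatus of dimension at least $\dim \hd'$ (equivalently, the minimal normal model embeds isometrically into any normal realization $(\ki, \Po'', U'', \xi'')$ of the same instrument), and since our model attains this lower bound it is minimal.

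For part (ii)(b) I would lean on the augmentation already constructed before the corollary: enlarge the apparatus to $\ki := \hd' \oplus \C$, lift the isometry to $Y_+\psi := (Y\psi, 0)$, and extend the pointer to the PVM $\Po'_+$ on $\Omega_+ = \Omega \cup \{x_0\}$, obtaining a Stinespring dilation $(\ki, \Po'_+, Y_+)$ of the extended instrument $\mc I_+$ with $\hM_+(X) = \hM(X)$ on $\Sigma$ and $\hM_+(\{x_0\}) = 0$. The extra dimension cures the unitary extension problem, so $U_+(\psi \otimes \xi_+) = Y_+\psi$ genuinely extends to a unitary $U_+$ on $\hi \otimes \ki$; feeding the pair $(Y_+, \Po'_+)$ through the same substitution as above shows that $(\hd' \oplus \C, \Po'_+, U_+, \xi_+)$ realizes $\mc I_+$. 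No minimality is claimed here for $\mc I$ itself, which is exactly what keeps the statement consistent with the possible failure of a unitary extension on $\hd'$.

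The step I expect to require the most care is the minimality assertion in (i) and (ii)(a): the reproducibility identity is a routine substitution and the rank hypotheses were already spent in Theorem \ref{thm:cont}, but asserting minimality cleanly needs the exact embedding/dimension property of minimal normal models from \cite{Pel2013II} together with uniqueness of the minimal Stinespring space up to unitary isomorphism, so that the equality $\dim \ki = \dim \hd'$ genuinely singles out the smallest admissible apparatus.
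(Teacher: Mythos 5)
Your proposal is correct and follows essentially the same route as the paper: the realization identity via the substitution $Y_\xi^* U^*(B\otimes\Po'(X))UY_\xi = Y^*(B\otimes\Po'(X))Y$, existence of $U$ from Theorem \ref{thm:cont} in case (i), minimality via the dimension bound/embedding result of \cite[Theorem 3]{Pel2013II} together with uniqueness of the minimal Stinespring dilation, and the one-dimension augmentation $(\hd'\oplus\C, \Po'_+, Y_+)$ with the extra outcome $x_0$ for case (ii)(b), exactly as constructed in the paragraph preceding the corollary. Nothing is missing, and your remark that no minimality is asserted in (ii)(b) correctly reflects the statement.
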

Also, Proposition \ref{prop:observables} can be generalized for an arbitrary observable $\Mo$ by choosing the isometry $C$, \textit{i.e.}, the corresponding $\Mo$-compatible instrument $\hM$, in such a way that $U$ in Eq.\,\eqref{cont:unitary} extends unitarily. In particular, if $r(x)=1$ for $\mu$-almost all $x\in\Omega$, that is, $\hM$ is a rank-1 instrument, then one can choose the decomposing isometries
$C_x:\,\hi_{m(x)}\to\hi\otimes\hi'_{1}\cong\hi$ of Eq.\,\eqref{cont:decomp} so that the rank of the complement of $C_xC_x^*=\sum_{k=1}^{m(x)}\kb{e_k(x)}{e_k(x)}$ is infinite (even if $m(x)=\infty$). For example, pick an orthonormal basis $\{h_n\}_{n=1}^\infty$ of $\hi$ and define
$e_{k}(x):=h_{2k}$ for all $k<m(x)+1$. Note that $\hd'\cong L^2(\mu)$, so that any vector of $\hi\otimes \hd'\cong L^2(\mu)\otimes\hi$ can be interpreted as a $\mu$-square integrable function from $\Omega$ to $\hi$. Then the constant functions $x\mapsto h_n$ form a $\mu$-measurable field of orthonormal bases of $L^2(\mu)\otimes\hi$. 
Now $CC^*h_{2k+1}=0$ for all $k$ implying that $\id_{L^2(\mu)\otimes\hi} - CC^*$ has infinite rank and the extension succeeds; see Example \ref{ex:rankinfinity} and the proof of Theorem \ref{thm:cont}.

Finally, it has been proven in \cite[Remark 3]{Pel2013II} that the dimension of the Hilbert space $\ki$ of any normal $\Mo$-measurement satisfies $\dim\ki\ge \dim L^2(\mu)$. We end with the following proposition that summarizes these facts.
\begin{proposition}\label{prop:contobservables}
Let $\Mo:\Sigma \rightarrow \lh$ be an observable and $\mu:\Sigma \rightarrow [0,1]$  a probability measure such that $\mu$ and $\Mo$ are mutually absolutely continuous. There exists a normal $\Mo$-measurement $(L^2(\mu), X\mapsto\hat\chi_X, U, \xi )$. Furthermore, the apparatus' Hilbert space $\ki$ of any normal $\Mo$-measurement satisfies $\dim \ki \geq \dim L^2(\mu)$.
\end{proposition}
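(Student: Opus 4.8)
The plan is to prove Proposition \ref{prop:contobservables} in two independent halves: an existence construction and a lower bound. For the existence of a normal $\Mo$-measurement on $L^2(\mu)$, I would invoke the material developed just before the statement. First I would take a minimal Stinespring dilation $(\hd', \Po', Y)$ of an $\Mo$-compatible instrument $\hM$; the freedom here is that $\hM$ is not fixed by $\Mo$, so I would deliberately choose $\hM$ to be rank-1, forcing $r(x)=1$ for $\mu$-almost all $x$ and hence $\hd'\cong L^2(\mu)$ with $\Po'$ the canonical spectral measure $X\mapsto\hat\chi_X$. With this choice the isometry $C$ of Eq.\,\eqref{cont:decomp} decomposes into fiberwise isometries $C_x:\hi_{m(x)}\to\hi\otimes\hi_1'\cong\hi$, and I would pick these concretely by setting $e_k(x):=h_{2k}$ for an orthonormal basis $\{h_n\}$ of $\hi$, exactly as sketched in the paragraph preceding the proposition. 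The point is that this leaves all the odd-indexed constant fields $x\mapsto h_{2k+1}$ in the kernel of $CC^*$, so $\id-CC^*$ has infinite rank; since $Q^\perp=\id-CC^*\le P^\perp$, one gets $\mathrm{rank}\,P^\perp=\infty$ and Example \ref{ex:rankinfinity} (via the argument in the proof of Theorem \ref{thm:cont}) yields the desired unitary extension $U$. This delivers the normal model $(L^2(\mu), X\mapsto\hat\chi_X, U, \xi)$.

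For the lower bound $\dim\ki\ge\dim L^2(\mu)$, I would simply cite the established result from \cite[Remark 3]{Pel2013II}, which is precisely what the excerpt indicates. Conceptually, the reason is that any normal $\Mo$-measurement $(\ki,\Po,U,\xi)$ yields via Eq.\,\eqref{obs:dilation} a Naimark dilation of $\Mo$ inside $\hi\otimes\ki$, and the minimal Naimark dilation lives in $\hd\cong L^2(\mu)\otimes(\text{fibers})$; the pointer space $\ki$ must be large enough to accommodate the canonical spectral measure indexed by the measure space, whose ``size'' is governed by $\dim L^2(\mu)$. I would not reprove this but would remark that it follows from the minimality/embedding theory already referenced in Remark \ref{unique}.

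The main obstacle is ensuring that the rank-1 instrument $\hM$ I construct is genuinely $\Mo$-compatible and that the chosen decomposable isometry $C$ is measurable and well-defined as a field. The compatibility is not automatic: I must verify that the fiberwise isometries $C_x$ assemble into a global isometry $C:\hd\to\hi\otimes\hd'$ satisfying $C^*(\id_\hi\otimes\Po'(X))C=\Po(X)$ so that $Y=CJ$ indeed reproduces $\Mo$ through Eq.\,\eqref{cont:isom}. Measurability of the field $x\mapsto C_x$ is what makes the constant-function trick legitimate, and I would lean on the identification $\hi\otimes\hd'\cong L^2(\mu)\otimes\hi$ under which the fields $x\mapsto h_n$ are manifestly measurable orthonormal bases. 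Once these two measurability/compatibility points are secured, the infinite-rank computation $CC^*h_{2k+1}=0$ is immediate and the existence half closes cleanly.

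Overall I would keep the proof short by delegating: the existence half to the explicit construction already rehearsed in the text together with Example \ref{ex:rankinfinity} and the Theorem \ref{thm:cont} argument, and the lower bound entirely to the cited \cite[Remark 3]{Pel2013II}. The only genuinely new work is assembling the rank-1 instrument with the specific even-index embedding and confirming that this is a valid $\Mo$-compatible Stinespring dilation; everything else is a matter of pointing to the right earlier result.
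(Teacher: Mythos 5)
Your proposal matches the paper's own argument essentially step for step: the existence half is exactly the paper's construction (choose a rank-1 $\Mo$-compatible instrument via the decomposable isometry $C$ with the even-index embedding $e_k(x):=h_{2k}$, deduce $\mathrm{rank}\,(\id - CC^*)=\infty$, hence $\mathrm{rank}\,P^\perp=\infty$, and invoke Example \ref{ex:rankinfinity} together with the argument of Theorem \ref{thm:cont}), and the lower bound is delegated to \cite[Remark 3]{Pel2013II}, just as in the paper. Your added remarks on verifying $\Mo$-compatibility and measurability of the field $x\mapsto C_x$ are correct points that the paper leaves implicit, and they do not change the route.
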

\newpage

\section*{Errata for articles \cite{Pel2013II, Pel2014, PeENT}} \label{errata}
In this Errata, we point out errors found in Refs.\,\cite{Pel2013II, Pel2014, PeENT}. The errors are due to the faulty assumption that an operator $U$ defined via $U(\psi \otimes \xi) = Y \psi$ for all $\psi \in \hi$, where $\xi\in\ki$ is some unit vector and $Y:\hi \rightarrow \hi \otimes \ki$ is an isometry, always extends to a unitary operator on $\hi \otimes \ki$.
The errors may occur only in the pathological case when the associate observable of an instrument is rank-$\infty$; see Theorem \ref{thm:cont}.

Errata for \cite{Pel2013II} (J.-P.\ Pellonp\"{a}\"{a}, {\it J.\ Phys.\ A: Math.\ Theor}.\ \textbf{46}, 025303, 2013):
\begin{itemize}
\item Page 10, Example 3: Let $\{h_n\}$ be ON-basis of $\hi$ and $\xi_\oplus$ a unit vector in $\hi_\oplus$. Since the sets $\{Y h_n\}_{n=1}^{\dim(\hi)}$ and $\{h_n \otimes \xi_{\oplus}\}_{n=1}^{\dim(\hi)}$ are ON-sets having the same cardinality, it was claimed that one can define  a unitary operator on $\hi \otimes \hi_\oplus$ by setting $U_{\xi_\oplus}(h_n \otimes \xi_\oplus) := Y h_n$. This claim is false in general; see Section \ref{sec:isometry} (Example \ref{ex:lifting}) of this paper. However, Theorem 3 of \cite{Pel2013II} is correct.
\item The inequality $\dim \hi_\oplus \geq \dim \hi$ (where $\hi_\oplus$ is the minimal Stinespring dilation space of an instrument and $\hi$ is the system's Hilbert space) in the second last line of page 10 is false. For a counter-example consider the Stinespring dilation of an instrument, with a 1-outcome associated observable, and assume that $\dim \hi \geq 2$.
\item Corollary 2 on page 11 is false in the case when $U_{\xi_\oplus}(\psi \otimes \xi_\oplus) = Y \psi$ does not extend to a unitary operator; see our Corollaries \ref{cor:instru} and \ref{cor:continuous} for comparison.
\item Remark 3 on page 11 is slightly misleading. In the case of a rank-$\infty$ POVM, one must choose the isometries $C_x$ properly in $[U(\psi \otimes \xi )](x) = (C_x \Ao(x) \psi) \otimes 1$ to define a unitary operator. In other cases, the isometries $C_x$ can be arbitrary.
\end{itemize}

Errata for \cite{Pel2014} (J.-P.\ Pellonp\"{a}\"{a}, {\it Found.\ Phys}.\ {\bf 44}, 71--90, 2014):
\begin{itemize}
\item The operators $U$ defined in Eq.\,(5) on page 80, in Example 6 on page 81, in Example 7 on page 82 and the sixth last line on page 88 may not in general extend to unitary operators. If the multiplicities $m_i$ or $m(x)$ are all infinite one must choose the instrument's structure vectors $\varphi_{iks}$, $\varphi_{ik}$ and $\varphi_k(x)$ properly; see Example \ref{ex:structurevectors}.
\end{itemize}

Errata for \cite{PeENT} (J.-P.\ Pellonp\"a\"a, {\it Phys.\ Lett.\ A} {\bf 376}, 3495--3498, 2012):
\begin{itemize}
\item The operator $U_{SA}$ defined below Eq.\,(3) on page 3497 may not extend to a unitary operator if all the multiplicities $m_i$ are infinite, {\it i.e.}, $m_i = \infty$ for all $i<N+1$; see Example \ref{ex:structurevectors}. One may add one extra dimension to the apparatus space to solve the problem; see Corollary \ref{cor:instru} and the construction above. This addition does not affect the subsequent calculations.
\end{itemize}


\section*{Acknowledgments} The authors wish to thank Dr.\ Teiko Heinosaari for useful discussions and comments on the manuscript.  
M.\ T.\ acknowledges the financial support from the University of Turku Graduate School (UTUGS).



\end{document}